\documentclass[preprint,5p]{elsarticle}

\usepackage[T1]{fontenc}
\usepackage[english]{babel}
\usepackage{lmodern}
\usepackage{graphicx}
\usepackage{amsmath,amsfonts,amsthm}
\usepackage{geometry}
\usepackage[utf8]{inputenc}
\usepackage{multirow}
\usepackage{booktabs}
\usepackage[table]{xcolor}
\usepackage{natbib}
\usepackage{enumitem}

\usepackage{calc}
\newtheorem{theorem}{Theorem}

\newtheorem{corollary}{Corollary}

\biboptions{sort&compress,square}

\journal{ArXiv}

\begin{document}

\renewcommand{\figurename}{Fig.}

\begin{frontmatter}



\title{Assessment of the maintenance cost and analysis of availability measures in a finite life cycle for a system subject to competing failures}

\author[a]{N. C. Caball\'{e}}
\author[b]{I. T. Castro\corref{cor1}}
\ead{inmatorres@unex.es}
\cortext[cor1]{Corresponding author.}

\address[a]{Department of Mathematics, University of Extremadura, Spain}

\begin{abstract}
This paper deals with the assessment of the performance of a system under a finite planning horizon. The
system is subject to two dependent causes of failure: internal degradation and
sudden shocks. We assume that internal degradation follows a gamma process. When the deterioration level of  the degradation
process exceeds a predetermined value, a degradation
failure occurs. Sudden shocks arrive at the system following a doubly stochastic Poisson process (DSPP). A sudden shock provokes the total breakdown of
the system. A condition-based maintenance (CBM) with periodic inspection times is developed. To evaluate the maintenance cost, recursive methods combining numerical integration and Monte Carlo simulation are developed to evalute the expected cost rate and its standard deviation. 
Also, recursive methods to calculate some transient measures of the system are given. Numerical examples are provided to illustrate the analytical results.

\end{abstract}

\begin{keyword}
Availability \sep condition based maintenance  \sep degradation threshold shock model
\sep interval reliability \sep gamma process \sep reliability.

\end{keyword}

\end{frontmatter}

\section*{Acronyms}\label{Acronyms}
    \noindent \textit{CBM:}	condition-based maintenance.\\
    \textit{CM:}	corrective maintenance.\\
    \textit{DSPP:}	doubly stochastic Poisson process.\\
    \textit{DTS:}	degradation-threshold-shock.\\
    $d.u.$:		degradation units.\\
    \textit{HPP:} homogeneous Poisson process.\\
    $m.u.$:		monetary units.\\
    \textit{NHPP:}	non-homogeneous Poisson process.\\
    \textit{PM:}	preventive maintenance.\\
    $t.u.$:		time units.\\

\section*{Notation}

     \noindent $A_{T}^{M}(t)$:		availability of the system at time $t$  with time between inspections $T$ and preventive threshold $M$. \vspace{0.2cm}
    
     \noindent $\tilde{A}_{T}^{M}(t)$:		estimation of $A_{T}^{M}(t)$.\vspace{0.2cm}
    
     \noindent $\alpha$, $\beta$:		shape and scale parameters of the gamma process.\vspace{0.2cm}
    
     \noindent $\alpha_{(v_i\%)},\beta_{(v_j\%)}$:  gamma process parameters modified under the $i$-th and $j$-th positions for the vector $\mathbf{v}$.\vspace{0.2cm}
    
     \noindent $C_c$:		corrective maintenance cost.\vspace{0.2cm}
    
     \noindent $C_d$:		downtime cost.\vspace{0.2cm}
    			    				
    \noindent $C_I$:		inspection cost.\vspace{0.2cm}
    
    \noindent $C_p$:		preventive maintenance cost.\vspace{0.2cm}  
    
    \noindent $C(t_1,t_2)$:		maintenance cost in $(t_1,t_2]$.\vspace{0.2cm}
    
    \noindent $C(t_f)$:		maintenance cost in the life cycle.\vspace{0.2cm}
    
    \noindent $C^{\infty}(T,M)$:		expected cost rate for a time between inspections $T$ and preventive threshold $M$.\vspace{0.2cm}
    
    \noindent $D_j$:		length of the $j$-th renewal cycle.\vspace{0.2cm}
    
    \noindent $E\left[C_{T}^{M}(t)\right]$:		expected transient cost at time $t$ with time between inspections $T$ and preventive threshold $M$.\vspace{0.2cm}
    
    \noindent $\tilde{E}\left[C_{T}^{M}(t)\right]$:		estimation of $E\left[C_{T}^{M}(t)\right]$.\vspace{0.2cm}
    
    \noindent $E\left[C_{T}^{M}(t)^2\right]$:		expected square cost at time $t$ with time between inspections $T$ and preventive threshold $M$.\vspace{0.2cm}
    
    \noindent $\tilde{E}\left[C_{T}^{M}(t)^2\right]$:		estimation of $E\left[C_{T}^{M}(t)^2\right]$.\vspace{0.2cm}
   
    \noindent $\tilde{E}^*\left[C^{M}_{T,\alpha_{(v_i\%)},\beta_{(v_j\%)}}(t_f)\right]$:		minimal expected transient cost varying gamma process parameters.\vspace{0.2cm}
   
    \noindent $\tilde{E}^*\left[C^{M}_{T,\lambda_{1,(v_i\%)},\lambda_{2,(v_j\%)}}(t_f)\right]$:		minimal expected transient cost varying sudden shock process parameters.\vspace{0.2cm} 
   
    \noindent $E\left[W_{T}^{M}(t_1,t_2)\right]$:	expected downtime in $(t_1,t_2]$ for a time between inspections $T$ and preventive threshold $M$.\vspace{0.2cm}
    			    				
    \noindent $\tilde{E}\left[W_{T}^{M}(t_1,t_2)\right]$:		estimation of $E\left[W_{T}^{M}(t_1,t_2)\right]$.\vspace{0.2cm}

    \noindent $f_{\alpha t,\beta}$:	density function of the gamma process.\vspace{0.2cm}
    
    \noindent $F_{\sigma_z}$:		distribution function of $\sigma_z$.\vspace{0.2cm}
    
    \noindent $\bar F_{\sigma_{z_2}-\sigma_{z_1}}$:		survival function of $\sigma_{z_2}-\sigma_{z_1}$.\vspace{0.2cm}
    
    \noindent $I(v,t)$:		survival function of $Y$ conditioned to $\sigma_{M_s}=v$.\vspace{0.2cm}
    
    \noindent $IR_T^M(t_1,t_2)$: interval reliability in $(t_1,t_2]$ with time between inspections $T$ and preventive threshold $M$. \vspace{0.2cm}
    
    \noindent $L$:		breakdown threshold.\vspace{0.2cm}
    			 	
    \noindent $\lambda(t)$:		intensity of the sudden shock process.\vspace{0.2cm}
    
    \noindent $\lambda_{1,(v_i\%)},\lambda_{2,(v_j\%)}$:  sudden shock process parameters modified under the $i$-th and $j$-th positions of the vector $\mathbf{v}$.\vspace{0.2cm}
    
    \noindent $M$:	preventive threshold.\vspace{0.2cm}
    
    \noindent $M_s$:	threshold from which the system is more prone to sudden shocks.\vspace{0.2cm}
    
    \noindent $N(t)$:		number of renewals up to $t$.\vspace{0.2cm}
    
    \noindent $N_s(t_1,t_2)$: number of sudden shocks in $(t_1,t_2]$. \vspace{0.2cm}
    
    \noindent $O(t)$:	 	deterioration level of the maintained system at time $t$.\vspace{0.2cm}
     
    \noindent $P_{R_{1}}^{M}(kT)$:		probability of a maintenance action at $k$-th inspection for a time between inspections $T$ and preventive threshold $M$.\vspace{0.2cm}
    			    				
    \noindent $\tilde{P}_{R_{1}}^{M}(kT)$:		estimation of $P_{R_{1}}^{M}(kT)$.\vspace{0.2cm}
    
    \noindent $P_{R_{1,p}}^{M}(kT)$, $P_{R_{1,c}}^{M}(kT)$:		probability of a preventive and corrective maintenance action at $k$-th inspection for a time between inspections $T$ and preventive threshold $M$, respectively.\vspace{0.2cm}
    			    				
    \noindent $\tilde{P}_{R_{1,p}}^{M}(kT)$, $\tilde{P}_{R_{1,c}}^{M}(kT)$:		estimation of $P_{R_{1,p}}^{M}(kT)$ and $P_{R_{1,c}}^{M}(kT)$.\vspace{0.2cm}
    
    \noindent $R_j$:		chronological time of the $j$-th renewal cycle.\vspace{0.2cm}
    
    \noindent $R_{T}^{M}(t)$:		reliability of the system at time $t$ with time between inspections $T$ and preventive threshold $M$.\vspace{0.2cm}
    
    \noindent $\tilde{R}_{T}^{M}(t)$:		estimation of $R_{T}^{M}(t)$.\vspace{0.2cm}
    
    \noindent $S_{T}^{M}(t)$:		standard deviation of the expected cost at time $t$ with time between inspections $T$ and preventive threshold $M$.\vspace{0.2cm}
    
    \noindent $\sigma_z$:	time to reach the deterioration level $z$.\vspace{0.2cm}
    
    \noindent $T$: 	time between inspections.\vspace{0.2cm}
    
    \noindent $t_f$:		length of the life cycle.\vspace{0.2cm}

    \noindent $\mathbf{v}$:		variation vector for model parameters.\vspace{0.2cm}
   
    \noindent $V^{M}_{T,\alpha_{(v_i\%)},\beta_{(v_j\%)}}(t_f)$:		relative variation percentage for gamma process parameters.\vspace{0.2cm}
   
    \noindent $V^{M}_{T,\lambda_{1,(v_i\%)},\lambda_{2,(v_j\%)}}(t_f)$		relative variation percentage for sudden shock process parameters.\vspace{0.2cm}
    
    \noindent $X(t)$:	 	underlying degradation process.\vspace{0.2cm}
    
    \noindent $Y$:		time to a sudden shock.\vspace{0.2cm}

\section{Introduction}
A fundamental aim in the industry field is to ensure the reliability of the systems. It is well-known that some systems suffer a physical degradation process which precedes the failure. This degradation process may involve chemical and physical changes in the system complicating its maintenance. The theory of stochastic processes provides an analytical framework for modelling the impact of the uncertain and time-dependent degradation processes.  

The gamma process is a stochastic cumulative process considered as one of the most appropriated processes for modelling the damage involved by the cumulative deterioration of systems and structures \cite{vanNoortwijk20092}. It is characterised by independent and non-negative gamma increments with identical scale parameters. The gamma process was first applied by Moran \cite{Moran1954} to model water flow into a dam. Later, Abdel-Hammed \cite{5215123} proposed the gamma process as a specific model for deterioration occurring randomly in time. From then on, gamma process has been widely used in the reliability field. The survey by Van Noortwijk  \cite{vanNoortwijk20092} provides many examples of the use of the gamma process in engineering. 

However, some systems are not only subject to internal degradation but are also exposed to sudden shocks which can cause its failure. For example, an ammeter degrades over time and also receive shocks (such as ligthtning) that could provoke the failure of the ammeter \cite{ye2011distribution}. The Light-Emitting Diode (LED) lamps are subject to underlying degradation processes and shocks (over-voltage and over-heating) \cite{SungHo201671}. Even the human body can be regarded as a competing risk system where some health markers can degrade due to the age and some shocks (catastrophic or not) can happen (as a heart attack). Up to our knowledge, Lemoine and Wenocur \cite{NAV:NAV3800320312} were the first to combine both causes of failure, proposing the Degradation-Threshold-Shock (DTS) models. Singpurwalla \cite{Singpurwalla1995} describes diverse failure models for DTS models.
Since two competing failure causes are analyzed, the analysis can be performed considering independent causes of failure \cite{Castro2015} or dependent \cite{Huynh2012140} \cite{Caballe2015RESS},\cite{Cha2016} .

Maintenance strategies regulate the different maintenance tasks which must be performed on the system. Establishing a good maintenance task, the correct functioning of the system is ensured and the maintenance cost can be optimised. Condition-based maintenance (CBM) is one of the most popular techniques used for degrading systems. CBM is a maintenance program that recommends to perform maintenance actions based on the information collected through a condition monitoring process using certain types of sensors or other appropriate indicators \cite{Ahmad2012}. The implementation of CBM programs for DTS models is not new (see \cite{Wenjian20122634} and \cite{Asadzadeh2014117} among others)

Many maintenance designs are planned based on an infinite operating horizon. It means that, after any replacement, the system is renewed by a new one with the same characteristics and the same process is assumed to be repeated indefinitely. Characteristics of these systems, such as the degradation level or the age, are often selected as criteria to optimise the long-run cost rate. Due to renewal properties, this long-run cost rate is equal to the expected cost in a renewal cycle divided by the length of the renewal cycle (see e.g. \cite{Zhao2010921,li2011condition,Fouladirad2011611,Ahmadi2014}). However, most systems actually have a finite operating life cycle since the system cannot always be replaced by a new one with the same characteristics as the previous one an infinite number of times. For instance, in military applications, a missile launching system is only required to be functioning within the designated mission time \cite{Wu2010}. Hence, the use of the asymptotic approach is questionable and the maintenance cost should be analysed under a transient approach. 

Although the transient approach is more realistic than the asymptotic approach, it is less used due to the analytical and computational difficulty of treatment that it involves. However, some works can be found in the literature where authors analyse maintenance strategies under a transient approach for systems subject to competing risks. For example, Taghipour {\it et al.} \cite{taghipour2010} proposed a model to find the optimal interval periodic inspection interval on a finite life cycle for a system subject to different types of failure. 

This paper expands the works by Cheng et al. \cite{Cheng201265} and Pandey et al. \cite{pandey2011} considering the time as a continuous variable and by adding a new component of risk (sudden shocks), whose arrival depends on the degradation process of the system. The framework exposed in \cite{Huynh2011497} inspires the setting of this problem. In this paper, we assume that the system is degraded following a gamma process and sudden shocks arrive at the system following a doubly stochastic Poisson process (DSPP) whose intensity function depends on the degradation process. A CBM with periodical inspection times is developed using the expected cost rate in the life cycle as objective cost function. The evaluation of the maintenance cost in the life cycle of the system is performed using recursive methods. Furthermore, the results obtained using recursive methods are compared to the results obtained based on strictly Monte Carlo simulation. Further comparisons of the maintenance cost are performed considering an infinite life cycle. The robustness of the gamma process parameters and the shock process is also analysed.

In many application fields, there is an increasing interest in evaluating the performance of maintained systems. Reliability and availability are two important performance measures in the traditional reliability field. But some situations are not covered by these indexes and new performance measures, such as the interval reliability, has been developed \cite{Limnios2012,Castro2016}. 
Along with the maintenance assessment, in this paper, some performance measures are also evaluated in the life cycle of the system. The evaluation of these performance measures in the life cycle of the system is performed using recursive methods. Furthermore, the results obtained using recursive methods are compared to the results obtained based on strictly Monte Carlo simulation.

In short, the main contributions of this paper are:
\begin{enumerate}
\item Development of a recursive method to obtain the expected cost and its standard deviation in the life cycle of the system.
\item Comparing the results obtained using this recursive method to the results obtained by using strictly Monte Carlo simulation.
\item Comparing the asymptotic expected cost rate and the expected cost in the life cycle of the system.
\item Analysis of the robustness of some parameters that describe the functioning of the system.
\item Assessment of the the availability, reliability, and interval reliability in the life cycle of the system using a recursive method and comparison of the results obtained by using strictly Monte Carlo simulation.
\end{enumerate} 

In order to develop these contributions, this paper is structured as follows. In Section 2 the general framework of the model is described. Expected transient cost and its standard deviation associated are analysed in Section 3.  Performance measures of the system are exposed in Section 4. Numerical examples are given in Section 5. Conclusions and further possible extensions of this paper are provided in Section 6.

\section{Framework of the problem}
A system subject to two dependent competing causes of failure, degradation and sudden shocks, is considered in this paper. The general assumptions of this model are:
\begin{enumerate}
\item  The system starts working at time $t=0$. This system is subject to an internal degradation process which evolves according to a homogeneous gamma process with parameters $\alpha$ and $\beta,$ where $\alpha, \beta >0$. Let $X(t)$ be the deterioration level of the system at time $t$. Thus, for two time instants  $s$ and $t$, with $s<t$, the density function of the increment deterioration level 
$X(t)-X(s)$ is given by
\begin{equation}\label{densidad_gamma}
f_{\alpha(t-s),\beta}(x)=\frac{\beta^{\alpha(t-s)}}{\Gamma(\alpha(t-s))}x^{\alpha(t-s)-1}e^{-\beta
x},\quad x>0,
\end{equation}
where $\Gamma(\cdot)$ denotes the gamma function defined as
\begin{equation}\label{funcion_gamma}
\Gamma(\alpha)=\displaystyle\int_{0}^{\infty}u^{\alpha-1}e^{-u}~du.
\end{equation}
The system fails due to degradation when the deterioration level exceeds a fixed threshold $L$,
called the breakdown threshold.

\item The system not only fails due to internal degradation, but also it is subject to sudden shocks which cause its failure. Sudden shocks arrive at the system according to a process $\left\{N_s(t),t\geq0\right\}$. This process shows the dependence between degradation and shocks. Following the spirit showed in \cite{Huynh2011497}, we assume that $\left\{N_s(t),t\geq0\right\}$ is a DSPP with intensity $\lambda(t, X(t))$ given by
\begin{equation}\label{intensidad_choques}
\lambda \left(t, X(t)\right)=\lambda_1(t)\mathbf{1}_{\left\{X(t)\leq M_s\right\}}+\lambda_2(t)\mathbf{1}_{\left\{X(t)> M_s\right\}},
\end{equation}
for $t\geq0$,  where $\lambda_1$ and $\lambda_2$ denote two failure rate functions which verify $\lambda_1(t)\leq\lambda_2(t)$, for all $t\geq 0$ and where $\mathbf{1}_{\{\cdot\}}$ denote the indicator function which equals $1$ if the argument is true and $0$ otherwise. The arrival of a sudden shock provokes the system failure.
\item The system is inspected each $T\ (T>0)$ time units ($t.u.$). In these instants, it is checked if the system is working or is down. If the system is down, a corrective maintenance (CM) is performed and the system is replaced by a new one. A CM event simulation is shown in Fig. \ref{m_correctivo}.
\begin{figure}[h]
\centering
\includegraphics[scale=0.20]{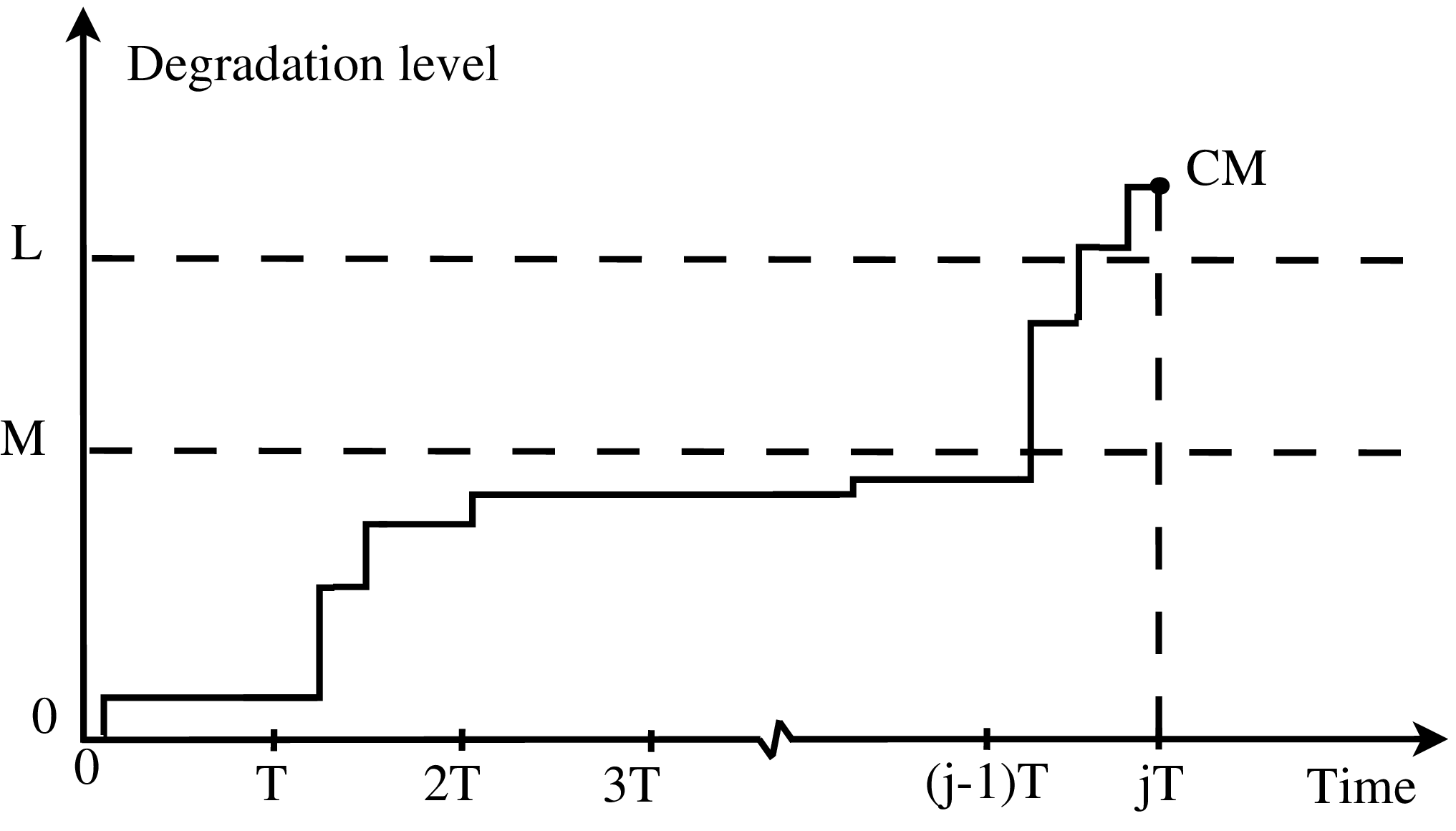}
\caption{A corrective maintenance event.}\label{m_correctivo}
\end{figure}

Let $M$ be the deterioration level from which the system is considered as too worn ($M<L$). If the system is still working and the deterioration level of the system exceeds the preventive threshold $M$, a preventive maintenance (PM) is performed and the system is replaced by a new one. Otherwise, no maintenance action is performed. A PM event simulation is shown in Fig. \ref{m_preventivo}.
\begin{figure}[h]
\centering
\includegraphics[scale=0.20]{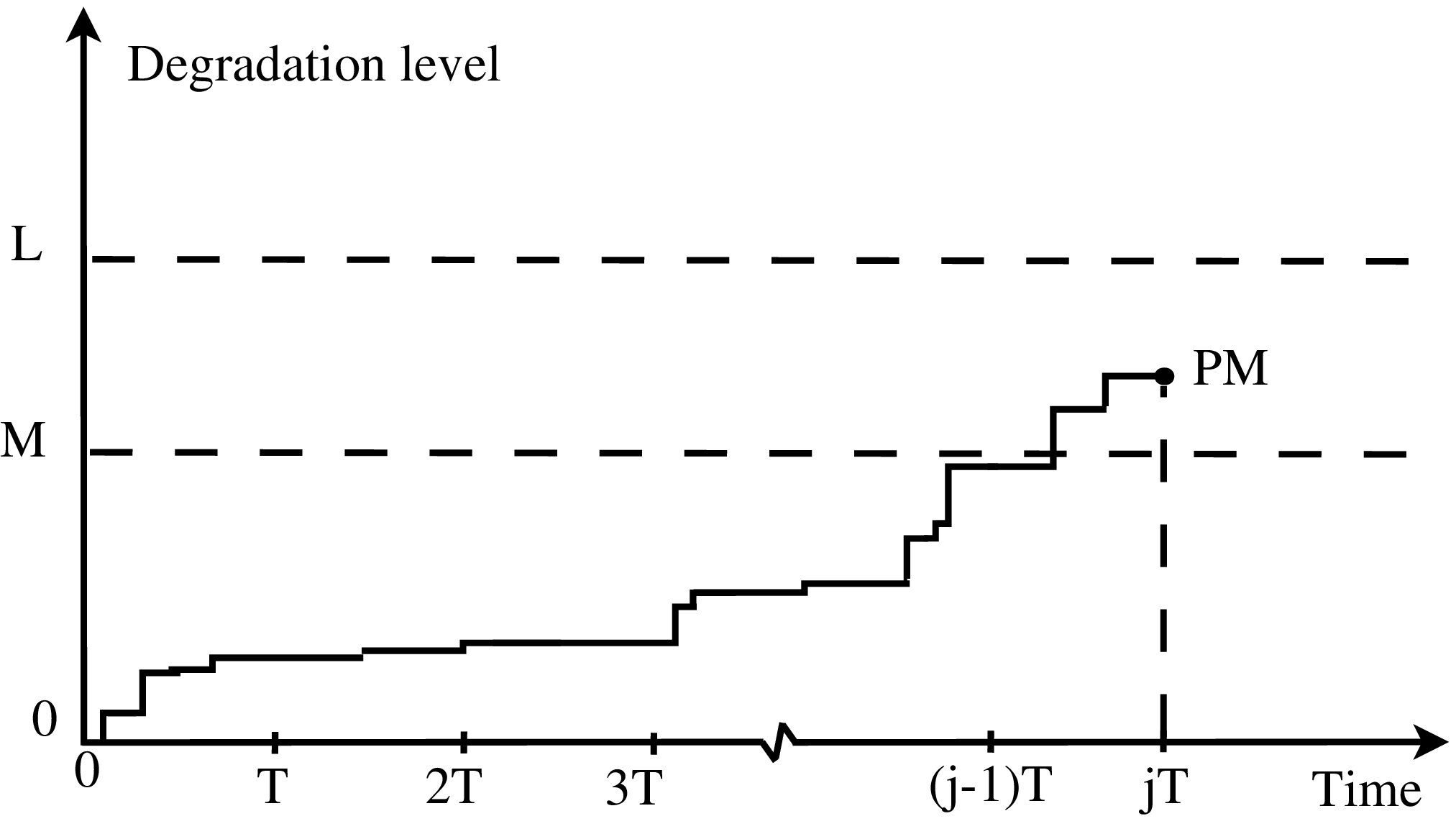}
\caption{A preventive maintenance event.}\label{m_preventivo}
\end{figure}
We assume that the time required to perform a maintenance action is negligible.

\item All maintenance actions imply a cost. A CM and a PM have associated a cost of $C_c$ and $C_p$ monetary units ($m.u.$), respectively, and each inspection implies a cost of $C_I$ $m.u.$ In addition, if the system fails, the system is down until the next inspection. Each time unit that the system is down, a cost of $C_d$ $m.u./t.u.$ is incurred. We assume  $C_c>C_p>C_I$.

\item Let $(0,t_f]$ be the finite operating life cycle of the system. It means that, if the calendar time exceeds $t_f$, the system can no longer be replaced by a new one with the same charasteristics.
\end{enumerate}

Let $\sigma_z$ be the random variable describing the time to reach a certain degradation level $z$. The distribution function of $\sigma_z$, denoted as $F_{\sigma_z}$, is given by
\begin{equation}\label{distribucion_sigmaz}
\begin{array}{l}
\begin{aligned}
F_{\sigma_z}(t)=&P[X(t)\geq
z]\\
=&\displaystyle\int_{z}^{\infty}f_{\alpha
t,\beta}(x)~dx=\frac{\Gamma(\alpha t,z\beta)}{\Gamma(\alpha t)},
\end{aligned}
\end{array}
\end{equation}
for $t\geq0$ where $f_{\alpha t,\beta}(x)$ and $\Gamma(\alpha t)$
are given by (\ref{densidad_gamma}) and (\ref{funcion_gamma}),
respectively, and
\begin{equation*}\label{funcion_gamma_incompleta}
\Gamma(\alpha,x)=\displaystyle\int_{x}^{\infty}u^{\alpha-1}e^{-u}~du,
\end{equation*}
denotes the incomplete gamma function for $x\geq0$ and $\alpha>0$.

For $z_1\leq z_2$, the survival function of $\sigma_{z_2}-\sigma_{z_1}$ is given by
\begin{equation}\label{supervivencia_sigma_z2_sigma_z1}
\begin{array}{l}
\begin{aligned}
\bar F_{\sigma_{z_2}-\sigma_{z_1}}(t)=&P\left[\sigma_{z_2}-\sigma_{z_1}\geq
t\right]\\
=&\displaystyle\int_{x=0}^{\infty}\int_{y=z_1}^{\infty}f_{\sigma_{z_1},X(\sigma_{z_1})}(x,y)\\
&F_{\alpha
t,\beta}(z_2-y)~dy~dx,
\end{aligned}
\end{array}
\end{equation}
where $F_{\alpha t,\beta}$ denotes the distribution function of $f_{\alpha t,\beta}$ and $f_{\sigma_{z_1},X(\sigma_{z_1})}$ denotes the joint density function of $\left(\sigma_{z_1},X(\sigma_{z_1})\right)$ provided by Bertoin
\cite{bertoin1998} as
\begin{equation*}\label{densidad_conjunta}
f_{\sigma_{z_1},X(\sigma_{z_1})}(x,y)=\displaystyle\int_{0}^{\infty}\mathbf{1}_{\left\{z_1\leq
y<z_1+s\right\}} f_{\alpha x,\beta}(y-s)\mu(ds),
\end{equation*}
where $\mu(ds)$ denotes the L{\'e}vy measure associated with a gamma process with parameters $\alpha$ and $\beta$ given by
\begin{equation*}\label{medida_levy}
\mu(ds)=\alpha\frac{e^{-\beta s}}{s},\quad s>0.
\end{equation*}
From Assumption 2, the shock process follows a doubly stochastic Poisson process where the intensity of the shocks depends on time $t$ and on the degradation level $X(t)$. It means that, given a path $x$ de $X(t)$, the process $\left\{N_s(t),t\geq 0\right\}$ is a non homogeneous Poisson process with intensity  $\lambda(t,x)$. In absence of maintenance, the 
time to a failure system is defined as the minimum
$D=\min(\sigma_L,Y)$
where
$$Y=\inf\left(t \geq 0, N_s(t)=1\right), $$
with survival function
$$P(D>t)=\mathbb{E}\left[\mathbf{1}_{\left\{D>t\right\}}exp\left(-\int_{0}^{t}\lambda(s, X(s)ds)\right)\right]$$

Let $I(v,t)$ be the survival function of $Y$ for $t\geq v$, conditioned to  $\sigma_{M_s}=v$. That is
\begin{equation}\label{funcionI}
\begin{array}{l}
\begin{aligned}
I(v,t)=&P\left[Y>t|_{\sigma_{M_s}=v}\right]\\
=&\exp \left\{-\displaystyle\int_0^t \lambda\left(z\right)~dz\right\}
=\frac{\bar F_1(v)}{\bar F_1(0)}\frac{\bar
F_2(t)}{\bar F_2(v)},
\end{aligned}
\end{array}
\end{equation}
where
\begin{equation}\label{supervivenciaF1F2}
\bar F_j(t)=\exp\left\{-\displaystyle\int_0^t\lambda_j(u)du\right\}, \quad j=1,2,
\end{equation}
with density function $f_j(t)$, for $j=1,2$.

\section{Expected transient cost analysis}\label{Section:Expectedcost}
A goal in industry is to find the maintenance strategy that minimises an objective cost function. 
One of the most used function in the literature as objective cost function is the asymptotic cost rate \cite{pandey2011} (asymptotic cost per time unit) that has a simple expression if the functioning of the system can be modelled as a renewal process. 

Since in this paper the system is repaired after each preventive or corrective maintenance, let $D_1, D_2, \ldots$ be the time to the successive renewals of the system. Let $C^{\infty}(T,M)$ be the asymptotic cost rate with a time between inspections $T$ and preventive threshold $M$. Based on the {\it``Renewal Theorem''}, $C^{\infty}(T,M)$ is equal to the expected cost in a renewal cycle divided by the length of the renewal cycle. That is
$$C^{\infty}(T,M)=\lim\limits_{t\rightarrow\infty}\frac{C(t)}{t}=\frac{E[C_1]}{E[D_1]},$$
where $C(t)$ denotes the maintenance cost at time $t$, and $C_1$ and $D_1$ the cost and the length of a renewal cycle, respectively. In this paper, $C^{\infty}(T,M)$ is given by 
\begin{equation}\label{asymptotic_cost_rate}
\begin{array}{l}
\begin{aligned}
&\frac{\displaystyle \sum_{k=1}^{\infty}\Big[C_cP_{R_1,c}^M(kT)+C_pP_{R_1,p}^M(kT)+C_I(k-1)P_{R_1}^M(kT)\Big]}{\displaystyle \sum_{k=1}^{\infty}kT~P_{R_{1}}^M(kT)}\\
+&\frac{\displaystyle \sum_{k=1}^{\infty}C_dE\left[W_T^M((k-1)T,kT)\right]}{\displaystyle \sum_{k=1}^{\infty}kT~P_{R_{1}}^M(kT)},\\
\end{aligned}
\end{array}
\end{equation}
where $P_{R_{1}}^M(kT)$ denotes the probability of the first maintenance action at time $kT$ for $k=1,2,\ldots$ given by
\begin{equation}\label{prob_reemplazamiento}
P_{R_{1}}^M(kT)=P_{R_{1,p}}^M(kT)+P_{R_{1,c}}^M(kT),
\end{equation}
$P_{R_{1,p}}^M(kT)$ and $P_{R_{1,c}}^M(kT)$ denote the probability of the first preventive and corrective maintenance action at time $kT$ for $k=1,2,\ldots$, respectively, and $E\left[W_T^M((k-1)T,kT)\right]$ the expected downtime in $((k-1)T,kT]$. The analytical expressions for these quantities were provided by Huynh {\it et al.} \cite{Huynh2011497}.

Since in this paper the objective is to evaluate the maintenance cost in the life cycle of the system, the expected cost rate in this life cycle is used as objective cost function. 
Let $R_j$ be the chronological time of the $j$-th renewal cycle, for $j=0,1,2,\ldots,N(t_f)+1$, being $N(t_f)$ the number of complete renewals in the life cycle of the system. Thus, $R_j$ is given by
$$R_j=\displaystyle\sum_{n=1}^{j}D_n,$$
for $R_0=0$, where $D_n$ denotes the length of the $n$-th renewal cycle, with $n=1,2,\ldots,N(t_f)+1$. Hence, the length of the $n$-th renewal cycle $D_n$ is given by
\begin{equation*}
D_n=\left\{\begin{array}{ll}
R_n-R_{n-1},\quad &\text{if}\quad n=1,2,\ldots,N(t_f)\\
t_f-R_{N(t_f)},\quad &\text{if}\quad n=N(t_f)+1
\end{array}\right..
\end{equation*}
Fig. \ref{renewal_cycle_sequence} shows a process realisation.
\begin{figure}[h]
\centering
\includegraphics[scale=0.20]{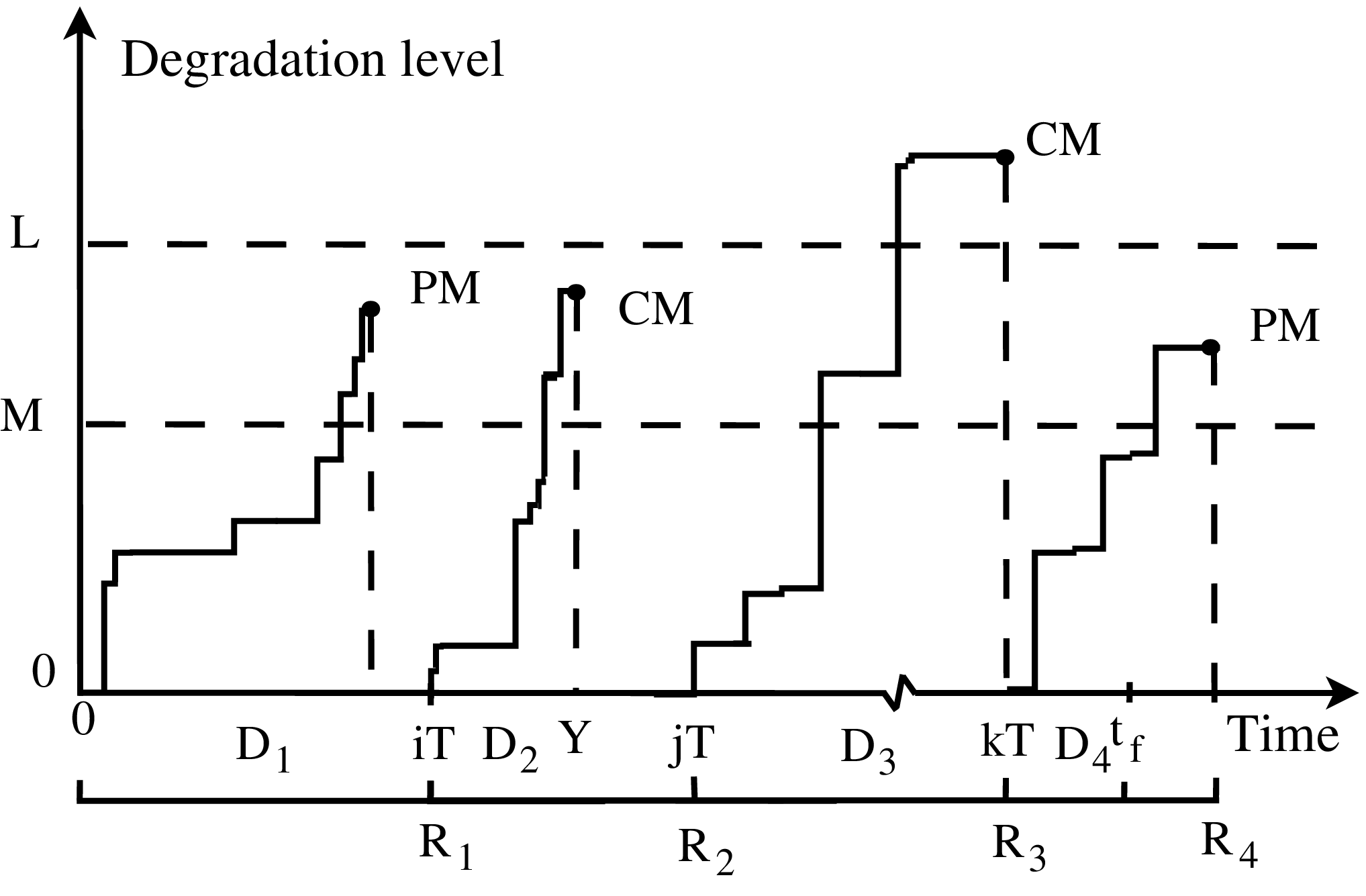}
\caption{A realization of a renewal cycle sequence.}\label{renewal_cycle_sequence}
\end{figure}

The total cost in the finite life cycle is the sum of the incurred costs in the different $N(t_f)$ renewal cycles and the incurred cost in $(R_{N(t_f)},t_f]$. That is
\begin{equation*}\label{coste_total}
C(t_f)=\displaystyle\sum_{j=1}^{N(t_f)} C(R_{j-1},R_j)+C(R_{N(t_f)},t_f),
\end{equation*}
where  $C(t_1,t_2)$ denotes the cost in the interval $(t_1,t_2]$, and $C(0,t_f)$ is simplified as $C(t_f)$.

Let $E\left[C_T^M(t)\right]$ be the expected transient cost at time $t$ with a time between inspections $T$ and a preventive threshold $M$. The next result provides the Markov renewal equation that fulfils $E\left[C_T^M(t)\right]$.

\begin{theorem}\label{theorem_expected_cost}
For $t<T$, the expected transient cost, $E\left[C_T^M(t)\right]$ is given by
\begin{equation*}\label{EC0t}
\begin{array}{l}
\begin{aligned}
E\left[C_T^M(t)\right]=&C_d\displaystyle \int_{0}^{t}f_{\sigma_{M_s}}(u)\int_{u}^{t}\Bigg[-\frac{\partial}{\partial v}\left(I(u,v)\right.\\
&\left.\bar F_{\sigma_{L}-\sigma_{M_s}}(v-u)\right)\Bigg](t-v)~dv~du\\
+&C_d\displaystyle \int_{0}^{t}f_1(u)\bar F_{\sigma_{M_s}}(u)(t-u)du,\\
\end{aligned}
\end{array}
\end{equation*}
where  $\bar F_{\sigma_{M_s}}(u)$ and $f_{\sigma_{M_s}}(u)$ denote the survival function and density function of $\sigma_{M_s}$ given by (\ref{distribucion_sigmaz}), $\bar F_{\sigma_{L}-\sigma_{M_s}}$ and $I(x,y)$ the survival functions given by (\ref{supervivencia_sigma_z2_sigma_z1}) and (\ref{funcionI}), and $f_1(x)$ is the density function of the survival function given by (\ref{supervivenciaF1F2}).

For $t\geq T$, the expected transient cost fulfils the following recursive equation
\begin{equation}\label{ECt}
E\left[C_T^M(t)\right]=\displaystyle \sum_{k=1}^{\lfloor t/T\rfloor}E\left[C_T^M(t-kT)\right] P_{R_1}^M(kT)+G_T^M(t),
\end{equation}
where
\begin{equation*}\label{Gx}
\begin{array}{l}
\begin{aligned}
G_T^M(t)=&\displaystyle \sum_{k=1}^{\lfloor t/T\rfloor}\Big(C_p+C_I(k-1)\Big)P_{R_{1,p}}^M(kT)\\
+&\displaystyle \sum_{k=1}^{\lfloor t/T\rfloor}\Big(C_c+C_I(k-1)\Big)P_{R_{1,c}}^M(kT)\\
+&\displaystyle \sum_{k=1}^{\lfloor t/T\rfloor}C_dE\left[W_T^M((k-1)T,kT)\right]P_{R_{1,c}}^M(kT)\\
+&\lfloor t/T\rfloor C_I\Bigg(1-\sum_{k=1}^{\lfloor t/T\rfloor} P_{R_1}^M(kT)\Bigg)\\
+&C_dE\left[W_T^M(\lfloor t/T\rfloor T,t)\right]\Bigg(1-\sum_{k=1}^{\lfloor t/T\rfloor} P_{R_1}^M(kT)\Bigg),
\end{aligned}
\end{array}
\end{equation*}
with initial condition $E\left[C_T^M(0)\right]=0$ and where $\lfloor t/T \rfloor$ denotes the integer part of $t/T$.
\end{theorem}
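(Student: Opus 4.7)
The plan splits naturally along the two regimes $t<T$ and $t\ge T$.

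For $t<T$, no inspection has taken place, so no preventive or corrective cost can have been charged and the only contribution to $E[C_T^M(t)]$ is downtime. Writing the failure time as $D=\min(\sigma_L,Y)$, the quantity to compute is $C_d\,E[(t-D)^+]=C_d\int_0^t(t-v)f_D(v)\,dv$. I would obtain $f_D$ from $P(D>v)$ by conditioning on $\sigma_{M_s}$: on $\{\sigma_{M_s}>v\}$ one has $\sigma_L>v$ automatically and the DSPP intensity on $[0,v]$ equals $\lambda_1$, contributing $\bar F_{\sigma_{M_s}}(v)\bar F_1(v)$; on $\{\sigma_{M_s}=u\le v\}$ the intensity path is completely determined by $u$ (it is $\lambda_1$ before $u$ and $\lambda_2$ after), so the DSPP becomes an ordinary inhomogeneous Poisson process and, using the independence of the gamma increments after $\sigma_{M_s}$, the contribution is $\int_0^v f_{\sigma_{M_s}}(u)\,I(u,v)\,\bar F_{\sigma_L-\sigma_{M_s}}(v-u)\,du$. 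A Leibniz differentiation of these two pieces produces boundary terms $\pm f_{\sigma_{M_s}}(v)\bar F_1(v)$ that cancel, leaving $f_D(v)=\bar F_{\sigma_{M_s}}(v)f_1(v)+\int_0^v f_{\sigma_{M_s}}(u)\bigl[-\partial_v\bigl(I(u,v)\bar F_{\sigma_L-\sigma_{M_s}}(v-u)\bigr)\bigr]du$. Inserting this into $\int_0^t(t-v)f_D(v)\,dv$ and swapping the order of integration in the double integral from $\{0\le u\le v\le t\}$ to $\{0\le u\le t,\ u\le v\le t\}$ yields exactly the two-term expression in the statement.

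For $t\ge T$, the approach is a first-step (renewal) analysis on the time $R_1$ of the first maintenance action. Because maintenance can only occur at a multiple of $T$, one has $R_1=kT$ with probability $P_{R_1}^M(kT)$ for $1\le k\le\lfloor t/T\rfloor$, while $R_1>t$ with probability $1-\sum_{k=1}^{\lfloor t/T\rfloor}P_{R_1}^M(kT)$. On $\{R_1=kT\}$ the cost charged in $[0,kT]$ breaks into (i) the $k-1$ inspection fees at $T,2T,\dots,(k-1)T$, (ii) the replacement cost ($C_p$ or $C_c$ depending on the type of the action), and (iii) only on a corrective event, the downtime $C_d W_T^M((k-1)T,kT)$ in the final inter-inspection interval (there cannot be downtime in any earlier interval, otherwise the corresponding earlier inspection would already have triggered CM and we would have $R_1<kT$). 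Since the model regenerates at $kT$, the residual cost on $(kT,t]$ is distributed as $C_T^M(t-kT)$, which is where the convolution term in (\ref{ECt}) comes from. On the complementary event $\{R_1>t\}$, the $\lfloor t/T\rfloor$ inspections already performed are each paid for and downtime can only sit in the trailing partial interval $(\lfloor t/T\rfloor T,t]$; these contributions are the last two summands of $G_T^M(t)$. Taking expectations and collecting everything assembles the claimed recursion.

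I expect the main difficulty to lie in the $t<T$ case, and more precisely in handling $P(D>v)$: the argument relies on the fact that the DSPP intensity depends on $X(t)$ only through the indicator $\mathbf{1}_{\{X(t)\le M_s\}}$, so that conditioning on $\sigma_{M_s}$ freezes the entire intensity path, and on the stationary-increments property used to write the conditional survival $\bar F_{\sigma_L-\sigma_{M_s}}(v-u)$. The subsequent Leibniz step is short but sign-sensitive, and the cancellation of the two $f_{\sigma_{M_s}}(v)\bar F_1(v)$ boundary terms is the sort of identity worth verifying twice. The $t\ge T$ part is conceptually a routine renewal argument; the only care needed is the observation that downtime cannot accumulate in the pre-renewal inter-inspection intervals on the event $\{R_{1,c}=kT\}$, which is what isolates the third summand of $G_T^M(t)$ to the corrective-maintenance branch.
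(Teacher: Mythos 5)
Your proposal is correct and follows essentially the same route as the paper's Appendix A: condition on $\sigma_{M_s}$ for the pre-first-inspection downtime and perform a first-step renewal decomposition on $R_1$ for $t\ge T$, including the observation that downtime prior to the first replacement can only occur in the last inter-inspection interval and only on the corrective branch. The only cosmetic difference is in the $t<T$ case, where you obtain the density of $D=\min(\sigma_L,Y)$ by differentiating $P(D>v)$ (with the boundary-term cancellation you flag), whereas the paper directly decomposes the expectation over the three failure events and recombines two of the resulting integrals via the product rule; both yield the identical two-term formula.
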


\begin{proof}
It is provided in Appendix A.
\end{proof}

\begin{corollary}
Setting $E\left[C_T^{M(i)}(t)\right]=E\left[C_T^{M}(t)\right]$, for all $(i-1)T<t\leq iT$ with $i=1,2,\ldots,\lfloor t_f/T\rfloor$ the expected transient cost, $E\left[C_T^{M(i)}(t)\right]$, is given by
\begin{equation*}
\begin{array}{l}
\begin{aligned}
E\left[C_T^{M(1)}(t)\right]=&C_d\Bigg(\displaystyle \int_{0}^{t}f_{\sigma_{M_s}}(u)\int_{u}^{t}\Bigg[-\frac{\partial}{\partial v}\left(I(u,v)\right.\\
&\left.\bar F_{\sigma_{L}-\sigma_{M_s}}(v-u)\right)\Bigg](t-v)~dv~du\Bigg)\\
+&C_d\displaystyle \int_{0}^{t}f_1(u)\bar F_{\sigma_{M_s}}(u)(t-u)du,\\
\end{aligned}
\end{array}
\end{equation*}
and for $i\geq1$
\begin{equation*}
\begin{array}{l}
\begin{aligned}
E&\left[C_T^{M(i+1)}(t)\right]=G_T^{M(i)}(t)\\
&+\displaystyle \sum_{k=1}^{i}E\left[C_T^{M(i+1-k)}(t-kT)\right] P_{R_1}^M(kT),
\end{aligned}
\end{array}
\end{equation*}
where
\begin{equation*}
\begin{array}{l}
\begin{aligned}
G_T^{M(i)}(t)=&\displaystyle \sum_{k=1}^{i}\Big(C_p+C_I(k-1)\Big)P_{R_{1,p}}^M(kT)\\
+&\displaystyle \sum_{k=1}^{i}\Big(C_c+C_I(k-1)\Big)P_{R_{1,c}}^M(kT)\\
+&\displaystyle \sum_{k=1}^{i}C_dE\left[W_T^M((k-1)T,kT)\right]P_{R_{1,c}}^M(kT)\\
+&iC_I\Big(1-\sum_{k=1}^{i} P_{R_1}^M(kT)\Big)\\
+&C_dE\left[W_T^M(i T,t)\right]\Big(1-\sum_{k=1}^{i} P_{R_1}^M(kT)\Big).
\end{aligned}
\end{array}
\end{equation*}
\end{corollary}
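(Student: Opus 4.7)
The corollary is essentially a piecewise reindexing of Theorem 1, so the plan is to show that the proposed formulas are just the theorem's expressions rewritten under the agreement that $E[C_T^{M(i)}(t)]$ denotes the value of $E[C_T^M(t)]$ whenever $(i-1)T < t \le iT$. No new probabilistic content has to be introduced; the work is purely bookkeeping of the indices $\lfloor t/T \rfloor$ and the shifts $t - kT$.

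First I would handle the base case $i=1$. Here $0 < t \le T$, so $\lfloor t/T \rfloor \in \{0,1\}$, and $t < T$ in the open part of the interval puts us exactly in the first regime of Theorem 1. Reading off that formula and relabeling the left-hand side as $E[C_T^{M(1)}(t)]$ gives the first displayed equation in the statement verbatim. (The single point $t=T$ is absorbed by continuity, or equivalently by noting that the recursive formula for $i=1$ below reduces to the same integral at $t=T$ since $G_T^{M(0)}(T)=0$.)

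Next I would do the inductive/recursive step. Fix $i \ge 1$ and take $t$ in the $(i+1)$-th interval, i.e.\ $iT < t \le (i+1)T$, so that $\lfloor t/T \rfloor = i$. Substituting this into the recursive equation (\ref{ECt}) of Theorem 1 collapses both the upper summation limit and the floor arguments inside $G_T^M(t)$ to $i$; in particular the inspection-count factor becomes $\lfloor t/T\rfloor C_I = i C_I$ and the residual downtime term becomes $C_d E[W_T^M(iT,t)]$. This immediately identifies $G_T^M(t)$ with the $G_T^{M(i)}(t)$ written in the corollary. For the convolution sum, the essential remark is that for $k = 1, 2, \ldots, i$ we have
\begin{equation*}
(i-k)T < t - kT \le (i+1-k)T,
\end{equation*}
so $t-kT$ lies in the $(i+1-k)$-th inspection interval, and by the very definition of the piecewise notation $E[C_T^M(t-kT)] = E[C_T^{M(i+1-k)}(t-kT)]$. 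Replacing each term in the sum yields exactly the recursion stated in the corollary.

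The only subtlety I anticipate is keeping the index shift $k \mapsto i+1-k$ consistent and verifying that $k$ ranges precisely over $\{1,\ldots,i\}$ so that the arguments $t-kT$ remain positive and stay inside the $t_f$-horizon; this is where I would be most careful, but it is routine once the interval arithmetic above is written out. With those observations, both displayed equations in the corollary follow directly from Theorem 1 without any additional computation.
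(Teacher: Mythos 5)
Your proof is correct and matches the paper's (implicit) treatment: the corollary carries no separate proof in the paper precisely because it is the reindexing of Theorem \ref{theorem_expected_cost} that you spell out, with $\lfloor t/T\rfloor = i$ on the $(i+1)$-th interval and $t-kT$ falling in the $(i+1-k)$-th interval. Your explicit handling of the interval arithmetic and the endpoint $t=iT$ is more careful than the paper itself, but introduces nothing different in substance.
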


In order to analyse the uncertainty associated with the expected transient cost, the standard deviation is calculated. Let $\Big(S^M_T(t)\Big)^2$ be the variance of expected transient cost at time $t$ with periodic inspection times $T$ and preventive threshold $M$ defined as
\begin{equation}\label{variance_initial}
\Big(S^M_T(t)\Big)^2=E\left[C_T^{M}(t)^2\right]-\left(E\left[C_T^{M}(t)\right]\right)^2.
\end{equation}

Based on Theorem \ref{theorem_expected_cost}, the following result is obtained.

\begin{theorem}\label{Theorem_mean_square}
For $t<T$, the expected square cost at time $t>0$, $E\left[C_T^{M}(t)^2\right]$, is given by
\begin{equation*}\label{B0t}
\begin{array}{l}
\begin{aligned}
E\left[C_T^M(t)^2\right]=&C_d^2\Bigg(\displaystyle \int_{0}^{t}f_{\sigma_{M_s}}(u)\int_{u}^{t}\Bigg[-\frac{\partial}{\partial v}\left(I(u,v)\right.\\
&\left.\bar F_{\sigma_{L}-\sigma_{M_s}}(v-u)\right)\Bigg](t-v)^2~dv~du\Bigg)\\
+&C_d^2\displaystyle \int_{0}^{t}f_1(u)\bar F_{\sigma_{M_s}}(u)(t-u)^2du.\\
\end{aligned}
\end{array}
\end{equation*}

For $t\geq T$, the mean square fulfils the following recursive equation
\begin{equation}\label{Bt}
E\left[C_T^M(t)^2\right]=\displaystyle \sum_{k=1}^{\lfloor t/T\rfloor}E\left[C_T^M(t-kT)^2\right]P_{R_1}^M(kT)+H_T^M(t),
\end{equation}
where
\begin{equation*}\label{H_t}
\begin{array}{l}
\begin{aligned}
H_{T}^M(t)=&\displaystyle \sum_{k=1}^{\lfloor t/T \rfloor } \left(C_p+C_I(k-1)\right)^2P_{R_{1,p}}^M(kT)\\
+&\displaystyle \sum_{k=1}^{\lfloor t/T \rfloor } \left(C_c+C_I(k-1)+C_dE\left[W_{T}^M((k-1)T,kT)\right]\right)^2\\
&P_{R_{1,c}}^M(kT)\\
+&2\sum_{k=1}^{\lfloor t/T \rfloor }\left(C_c+C_I(k-1)\right)E\left[C_T^M(t-kT)\right]\\
&P_{R_{1,c}}^M(kT)\\
+&2\sum_{k=1}^{\lfloor t/T \rfloor }C_dE\left[W_{T}^M((k-1)T,kT)\right]\\
&E\left[C_T^M(t-kT)\right]P_{R_{1,c}}^M(kT)\\
+&2\sum_{k=1}^{\lfloor t/T \rfloor }\left(C_p+C_I(k-1)\right)E\left[C_T^M(t-kT)\right]\\
&P_{R_{1,p}}^M(kT)\\
+&\Big(\lfloor t/T\rfloor C_I +C_dE\left[W_T^M(\lfloor t/T\rfloor T,t)\right] \Big)^2\\
&\Bigg(1-\sum_{k=1}^{\lfloor t/T\rfloor}P_{R_1}^M(kT)\Bigg),\\
\end{aligned}
\end{array}
\end{equation*}
with initial condition $E\left[C_T^M(0)^2\right]=0$.
\end{theorem}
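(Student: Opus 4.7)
The strategy parallels the proof of Theorem~\ref{theorem_expected_cost}, applying the same renewal-conditioning argument to the squared cost and splitting into the cases $t<T$ and $t\geq T$.

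For $t<T$ no inspection has occurred yet, so no preventive or corrective action is possible and the only contribution to the cost is the downtime cost:
\[
C_T^M(t)=C_d\,(t-D)^+,\qquad D=\min(\sigma_L,Y).
\]
Squaring gives $C_T^M(t)^2=C_d^2\bigl((t-D)^+\bigr)^2$, and taking expectation via the same decomposition used in Theorem~\ref{theorem_expected_cost} (conditioning on whether $\sigma_{M_s}$ is attained before the failure, which determines the active shock intensity, $\lambda_1$ or $\lambda_2$) yields the stated formula with the factors $(t-v)$ and $(t-u)$ replaced by $(t-v)^2$ and $(t-u)^2$.

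For $t\geq T$, I condition on the first renewal time $R_1$. On the event $\{R_1=kT\}$ with $kT\leq\lfloor t/T\rfloor T$, the renewal property of the combined gamma-plus-DSPP dynamics gives $C_T^M(t)=C_1+C_T^M(t-kT)'$, where $C_1$ is the first-cycle cost and the prime denotes an independent copy. Squaring yields
\[
(C_1+C_T^M(t-kT)')^2=C_1^2+2\,C_1\,C_T^M(t-kT)'+(C_T^M(t-kT)')^2.
\]
Independence between $C_1$ and $C_T^M(t-kT)'$ conditional on $\{R_1=kT\}$ and summation over $k$ produces the recursive term $\sum_{k=1}^{\lfloor t/T\rfloor}E[C_T^M(t-kT)^2]\,P_{R_1}^M(kT)$ in~\eqref{Bt}. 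The first-moment sums $E[C_1^2\mathbf{1}_{R_1=kT}]$ and the cross terms $2\,E[C_1\mathbf{1}_{R_1=kT}]\,E[C_T^M(t-kT)]$, together with the contribution from $\{R_1>\lfloor t/T\rfloor T\}$, assemble into $H_T^M(t)$.

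The bookkeeping of $H_T^M(t)$ is the delicate part: the first-cycle cost must be split into its PM and CM sub-events. On $\{R_1=kT,\text{PM}\}$ the cost is deterministic, $C_1=C_p+(k-1)C_I$, yielding the first line of $H_T^M(t)$ and (via the cross term with $E[C_T^M(t-kT)]$) its fifth line. On $\{R_1=kT,\text{CM}\}$ the cost is $(k-1)C_I+C_c+C_d\,W_T^M((k-1)T,kT)$; the square gives the second line of $H_T^M(t)$ and the two cross terms with $E[C_T^M(t-kT)]$ give the third and fourth lines. Finally, on $\{R_1>\lfloor t/T\rfloor T\}$ no renewal occurs before $t$ and the cost is $\lfloor t/T\rfloor C_I+C_d\,W_T^M(\lfloor t/T\rfloor T,t)$, producing the last line. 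The main technical point I anticipate is the treatment of the downtime random variable: consistent with the way downtimes already enter $G_T^M(t)$ in Theorem~\ref{theorem_expected_cost}, the downtime within a single inter-inspection interval is absorbed via its first moment $E[W_T^M(\cdot,\cdot)]$, which is what explains the appearance of $(\cdots+C_dE[W])^2$ rather than $E[(\cdots+C_dW)^2]$ inside $H_T^M(t)$; I will need to make this simplification explicit when writing out the PM/CM/no-renewal decomposition.
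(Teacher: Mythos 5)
Your proposal follows essentially the same route as the paper's Appendix B: condition on $R_1$, expand the square of the first-cycle cost plus the independent remaining cost, use their independence to produce the recursive term, the squared first-cycle term and the cross term, and split the first-cycle cost into its PM, CM and no-renewal sub-events. You also correctly single out the one delicate point --- that the downtime enters $H_T^M(t)$ only through its first moment, so that $(\cdots+C_dE[W])^2$ stands in place of $E[(\cdots+C_dW)^2]$ --- a substitution the paper's own derivation performs silently.
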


\begin{proof}
It is given in Appendix B.
\end{proof}

\begin{corollary}
Setting $E\left[C_T^{M(i)}(t)^2\right]=E\left[C_T^{M}(t)^2\right]$, for all $(i-1)T<t\leq iT$ with $i=1,2,\ldots,\lfloor t_f/T\rfloor$ the expected square cost, $E\left[C_T^{M(i)}(t)^2\right]$, is given by
\begin{equation*}
\begin{array}{l}
\begin{aligned}
E\left[C_T^{M(1)}(t)^2\right]=&C_d^2\Bigg(\displaystyle \int_{0}^{t}f_{\sigma_{M_s}}(u)\int_{u}^{t}\Bigg[-\frac{\partial}{\partial v}\left(I(u,v)\right.\\
&\left.\bar F_{\sigma_{L}-\sigma_{M_s}}(v-u)\right)\Bigg](t-v)^2~dv~du\Bigg)\\
+&C_d^2\displaystyle \int_{0}^{t}f_1(u)\bar F_{\sigma_{M_s}}(u)(t-u)^2du,\\
\end{aligned}
\end{array}
\end{equation*}
and for $i\geq1$
\begin{equation*}
\begin{array}{l}
\begin{aligned}
E\left[C_T^{M(i+1)}(t)^2\right]=&\displaystyle \sum_{k=1}^{i}E\left[C_T^{M(i+1-k)}(t-kT)^2\right] P_{R_1}^M(kT)\\
+&H_T^{M(i)}(t),
\end{aligned}
\end{array}
\end{equation*}
where
\begin{equation*}
\begin{array}{l}
\begin{aligned}
H_{T}^{M(i)}(t)=&\displaystyle \sum_{k=1}^{i } \left(C_p+C_I(k-1)\right)^2P_{R_{1,p}}^M(kT)\\
+&\displaystyle \sum_{k=1}^{i } \left(C_c+C_I(k-1)+C_dE\left[W_{T}^M((k-1)T,kT)\right]\right)^2\\
&P_{R_{1,c}}^M(kT)\\
+&2\sum_{k=1}^{i }\left(C_c+C_I(k-1)\right)E\left[C_T^M(t-kT)\right]\\
&P_{R_{1,c}}^M(kT)\\
+&2\sum_{k=1}^{i}C_dE\left[W_{T}^M((k-1)T,kT)\right]\\
&E\left[C_T^M(t-kT)\right]P_{R_{1,c}}^M(kT)\\
+&2\sum_{k=1}^{i}\left(C_p+C_I(k-1)\right)E\left[C_T^M(t-kT)\right]\\
&P_{R_{1,p}}^M(kT)\\
+&\Big(i C_I +C_dE\left[W_T^M(i T,t)\right] \Big)^2\\
&\Bigg(1-\sum_{k=1}^{i}P_{R_1}^M(kT)\Bigg).\\
\end{aligned}
\end{array}
\end{equation*}
\end{corollary}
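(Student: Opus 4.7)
The plan is to obtain the corollary as an immediate reformulation of Theorem \ref{Theorem_mean_square}, organised interval by interval. The notation $E\left[C_T^{M(i)}(t)^2\right] = E\left[C_T^{M}(t)^2\right]$ for $t \in ((i-1)T, iT]$ is purely a labelling device that attaches the index of the inspection interval to the cost process, so no new content has to be proved; the task reduces to rewriting (\ref{Bt}) so that the recursion only refers to the previously computed intervals.

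For the base case $i = 1$, I would apply the first branch of Theorem \ref{Theorem_mean_square} (valid for $t < T$) to every $t \in (0, T]$, handling the endpoint $t = T$ by the one-sided convention implicit in the interval definition. This directly produces the formula stated for $E\left[C_T^{M(1)}(t)^2\right]$, since no inspection has yet occurred and the cost up to time $t$ is purely downtime cost accumulated after the first failure.

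For the inductive step, fix any $i \geq 1$ and take $t \in (iT, (i+1)T]$, so that $\lfloor t/T \rfloor = i$. Substituting this value into (\ref{Bt}) replaces every occurrence of $\lfloor t/T \rfloor$ by $i$; the remainder term $H_T^M(t)$ thereby becomes exactly the expression labelled $H_T^{M(i)}(t)$ in the corollary. The key observation is that for each $k = 1, \ldots, i$ the shifted time $t - kT$ lies in $((i-k)T, (i+1-k)T]$, so by the very definition of the indexed notation,
\[
E\!\left[C_T^M(t-kT)^2\right] \;=\; E\!\left[C_T^{M(i+1-k)}(t-kT)^2\right].
\]
Plugging this identity into (\ref{Bt}) yields the recursion displayed in the corollary.

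Because the corollary is really a bookkeeping rearrangement of Theorem \ref{Theorem_mean_square}, I do not expect any analytical obstacle. The only subtle point is the boundary behaviour at the endpoints $t = iT$, where two intervals meet; this is handled by the one-sided convention $(i-1)T < t \leq iT$ used to partition the life cycle and by the initial condition $E\left[C_T^M(0)^2\right] = 0$, which guarantees that the recursion terminates correctly when $k = i$ in the summation.
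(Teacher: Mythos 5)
Your proposal is correct and matches the paper's intent: the paper offers no separate proof of this corollary, treating it exactly as you do — a reindexing of Theorem \ref{Theorem_mean_square} in which $\lfloor t/T\rfloor$ is replaced by the interval index $i$ for $t\in(iT,(i+1)T]$ and $E\left[C_T^M(t-kT)^2\right]$ is relabelled $E\left[C_T^{M(i+1-k)}(t-kT)^2\right]$ since $t-kT\in((i-k)T,(i+1-k)T]$. The only loose point, the right endpoints $t=(i+1)T$ where $\lfloor t/T\rfloor=i+1$ rather than $i$, is a boundary convention the paper itself glosses over, so your treatment is consistent with the source.
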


Hence, by (\ref{variance_initial}) the standard deviation of the transient cost at time $t$, $S^M_T(t)$, is given by
\begin{equation}\label{standard_deviation}
S^M_T(t)=\sqrt{E\left[C_T^{M}(t)^2\right]-\left(E\left[C_T^{M}(t)\right]\right)^2}.
\end{equation}

\section{Performance measures of the system}
In addition to the expected cost and its standard deviation associated, recursive expressions for the availability, the reliability and the interval reliability of the system are obtained.

Let $A_T^M(t)$ be the availability of the system at time $t>0$, with time between inspections $T$ and preventive threshold $M$, that is, the probability that the system is working at time $t$

\begin{eqnarray*}
&& A_T^M(t) \\ =
&& \displaystyle\sum_{j=0}^{\infty}\mathbf{1}_{\{R_{j}\leq t<R_{j+1}\}}P\left[X(t-R_j)<L, Y>(t-R_j)\right].
\end{eqnarray*}

Next result provides the Markov renewal equation that fulfils the availability $A_T^M(t)$.

\begin{theorem}\label{availability}
For $t<T$, $A_T^M(t)$ is given by
\begin{equation*}\label{A0t}
\begin{array}{l}
\begin{aligned}
A_T^M(t)=&\bar F_{\sigma_{M_s}}(t)\bar F_{1}(t)\\
+&\displaystyle \int_{0}^{t}f_{\sigma_{M_s}}(u)\bar F_{\sigma_{L}-\sigma_{M_s}}(t-u)I(u,t)du.\\
\end{aligned}
\end{array}
\end{equation*}
For $t\geq T$, $A_T^M(t)$ fulfils the following Markov renewal equation
\begin{equation}\label{At}
\begin{array}{l}
\begin{aligned}
A_T^M(t)=&\displaystyle \sum_{k=1}^{\lfloor t/T\rfloor}A_T^M(t-kT)P_{R_1}^M(kT)\\
+&J_{T,1}^M(t)\mathbf{1}_{\left\{M\leq M_s\right\}}+J_{T,2}^M(t)\mathbf{1}_{\left\{M> M_s\right\}},
\end{aligned}
\end{array}
\end{equation}
where
\begin{equation}\label{Jt1}
\begin{array}{l}
\begin{aligned}
J_{T,1}^M(t)=&\bar F_{\sigma_{M}}(t)\bar F_{1}(t)\\
+&\displaystyle \int_{\lfloor t/T \rfloor T}^t f_{\sigma_{M}}(u)\int_{u}^{t}f_{\sigma_{M_s}-\sigma_{M}}(v-u)\\
&\bar F_{\sigma_{L}-\sigma_{M_s}}(t-v)I(v,t)~dv~du\\
+&\displaystyle \int_{\lfloor t/T \rfloor T}^t f_{\sigma_{M}}(u)\bar F_{\sigma_{M_s}-\sigma_{M}}(t-u)\bar F_{1}(t)~du,\\
\end{aligned}
\end{array}
\end{equation}
and 
\begin{equation}\label{Jt2}
\begin{array}{l}
\begin{aligned}
J_{T,2}^M(t)=&\bar F_{\sigma_{M_s}}(t)\bar F_{1}(t)\\
+&\displaystyle \int_{0}^{\lfloor t/T \rfloor T} f_{\sigma_{M_s}}(u)\int_{\lfloor t/T \rfloor T}^{t}f_{\sigma_{M}-\sigma_{M_s}}(v-u)\\
&\bar F_{\sigma_{L}-\sigma_{M}}(t-v)I(u,t)~dv~du\\
+&\displaystyle \int_{0}^{\lfloor t/T \rfloor T} f_{\sigma_{M_s}}(u)\bar F_{\sigma_{M}-\sigma_{M_s}}(t-u)I(u,t)~du\\
+&\displaystyle \int_{\lfloor t/T \rfloor T}^t f_{\sigma_{M_s}}(u)\bar F_{\sigma_{L}-\sigma_{M_s}}(t-u)I(u,t)~du,\\
\end{aligned}
\end{array}
\end{equation}
with initial condition $A_T^M(0)=1$ and where $P_{R_1}^M(kT)$ is given by (\ref{prob_reemplazamiento}).
\end{theorem}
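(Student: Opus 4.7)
\emph{Proof sketch for Theorem \ref{availability}.}

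\textbf{Case $t<T$.} Since the first inspection occurs at $T$, no maintenance has been carried out on $(0,t]$, so ``working at $t$'' means simultaneously $X(t)<L$ and $Y>t$. I would condition on the value of $\sigma_{M_s}$, which determines the shock intensity profile seen by the system. If $\sigma_{M_s}>t$, then $X(t)\le M_s<L$ automatically, and the intensity is $\lambda_1$ throughout, giving a contribution $\bar F_{\sigma_{M_s}}(t)\bar F_1(t)$. If $\sigma_{M_s}=u\le t$, then $X(t)<L$ is equivalent to $\sigma_L-\sigma_{M_s}>t-u$, while the no-shock probability along this conditional path is $I(u,t)$ by definition (\ref{funcionI}); integrating against the density $f_{\sigma_{M_s}}(u)$ produces the second term. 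Summing the two pieces yields the claimed formula.

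\textbf{Case $t\ge T$: renewal decomposition.} I would split on whether a renewal has occurred in $(0,t]$. The probability of having a first renewal at $kT$ is $P_{R_1}^M(kT)$, and after this renewal the process probabilistically restarts; conditional on this event, the availability at $t$ equals $A_T^M(t-kT)$. This gives the convolution sum in (\ref{At}). It remains to evaluate the probability $J^M_T(t)$ that no renewal has occurred by $t$ \emph{and} the system is up at $t$. Since $X$ is non-decreasing, ``no renewal by $t$'' is equivalent to $X(\lfloor t/T\rfloor T)<M$ together with no shock on $(0,\lfloor t/T\rfloor T]$ (i.e.\ the system has neither tripped a PM nor a CM at any of the previous inspections), and ``up at $t$'' additionally requires $X(t)<L$ and $Y>t$.

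\textbf{Computation of $J^M_T(t)$: the key step.} The main difficulty is tracking the shock intensity, which depends on whether $X$ has crossed $M_s$, while separately enforcing the no-renewal constraint $X(\lfloor t/T\rfloor T)<M$. This forces a case distinction on the relative order of $M$ and $M_s$. If $M\le M_s$, I would condition on $\sigma_M=u$: ``$u>t$'' gives $\bar F_{\sigma_M}(t)\bar F_1(t)$; ``$u\in(\lfloor t/T\rfloor T,t]$'' (the only other value compatible with no renewal) requires further conditioning on $\sigma_{M_s}$, producing the two integrals in (\ref{Jt1}) — one where $\sigma_{M_s}$ also lies in $(u,t]$ (intensity switches to $\lambda_2$, reducing the no-shock factor to $I(v,t)$ and demanding $\sigma_L-\sigma_{M_s}>t-v$), and one where $\sigma_{M_s}>t$ (intensity stays at $\lambda_1$). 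If $M>M_s$, I would condition on $\sigma_{M_s}=u$: the contribution $\bar F_{\sigma_{M_s}}(t)\bar F_1(t)$ comes from $u>t$; for $u\le\lfloor t/T\rfloor T$ the no-renewal constraint becomes $\sigma_M>\lfloor t/T\rfloor T$, splitting into the cases $\sigma_M\in(\lfloor t/T\rfloor T,t]$ (extra factor $\bar F_{\sigma_L-\sigma_M}(t-v)$) and $\sigma_M>t$; for $u\in(\lfloor t/T\rfloor T,t]$ we automatically have $\sigma_M>u>\lfloor t/T\rfloor T$, yielding the last term of (\ref{Jt2}). Bookkeeping these integrals against the joint density of $(\sigma_{M_s},X(\sigma_{M_s}))$ — through the identities (\ref{supervivencia_sigma_z2_sigma_z1}) and (\ref{funcionI}) — reproduces $J_{T,1}^M(t)$ and $J_{T,2}^M(t)$ exactly; combining with the convolution term and the initial condition $A_T^M(0)=1$ completes the argument.
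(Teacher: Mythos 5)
Your proposal is correct and follows essentially the same route as the paper's Appendix C: conditioning on the time of the first replacement $R_1$ to get the convolution sum, and, on the event $R_1>t$, performing the same case analysis on the order of $M$ and $M_s$ via the crossing times $\sigma_M$ and $\sigma_{M_s}$ to recover $J_{T,1}^M$ and $J_{T,2}^M$. No substantive differences to report.
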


\begin{proof}
It is given in Appendix C.
\end{proof}

Often, it is also of interest the probability that the system starts working at time 0 and it continues operating for a time interval. 
Let $R_T^M(t)$ be the reliability of the system at time $t$ with time between inspections $T$ and preventive threshold $M$, that is, the probability that the system is working in $(0,t]$ given by
\begin{equation*}
\begin{array}{l}
\begin{aligned}
R_T^M(t)=& P\left[O(u)<L, \forall u \in (0,t], N_s(0,t)=0\right], 
\end{aligned}
\end{array}
\end{equation*}
where $O(t)$ denotes the deterioration level of the maintained system at time $t$, that is, 
$$O(t)=\sum_{j=0}^{\infty}\mathbf{1}_{\{R_{j}\leq t<R_{j+1}\}} X(t-R_j)$$

Based on Theorem \ref{availability}, the following result is obtained.

\begin{theorem}\label{reliability}
For $t<T$, $R_T^M(t)$ is given by 
\begin{equation*}\label{R0t}
\begin{array}{l}
\begin{aligned}
R_T^M(t)=&A_T^M(t).\\
\end{aligned}
\end{array}
\end{equation*}
For $t\geq T$, $R_T^M(t)$ fulfils the following Markov renewal equation
\begin{equation}\label{Rt}
\begin{array}{l}
\begin{aligned}
R_T^M(t)=&\displaystyle \sum_{k=1}^{\lfloor t/T\rfloor}R_T^M(t-kT)P_{R_{1,p}}^M(kT)\\
+&J_{T,1}^M(t)\mathbf{1}_{\left\{M\leq M_s\right\}}+J_{T,2}^M(t)\mathbf{1}_{\left\{M> M_s\right\}},
\end{aligned}
\end{array}
\end{equation}
with initial condition $R_T^M(0)=1$ where $J_{T,1}^M$ and $J_{T,2}^M$ are given by (\ref{Jt1}) and (\ref{Jt2}), respectively.
\end{theorem}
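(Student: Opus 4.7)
The plan is to mirror the proof of Theorem \ref{availability}, exploiting the fact that the underlying deterioration is non-decreasing so that, on paths where no renewal has yet taken place, being up at time $t$ coincides with having been up throughout $(0,t]$. The only substantive difference from the availability recursion will be that corrective actions disqualify a path from contributing to the reliability, so only $P_{R_{1,p}}^M(kT)$ survives in the convolution sum.

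For the base case $t<T$, I would observe that no inspection has yet occurred, hence no maintenance action is possible and the maintained deterioration $O$ coincides with the gamma process $X$. Because $X$ is monotone non-decreasing, $O(u)<L$ for all $u\in(0,t]$ reduces to $O(t)<L$, and $N_s(0,t)=0$ is the usual no-shock event. This is precisely the event whose probability is $A_T^M(t)$ in Theorem \ref{availability}, so $R_T^M(t)=A_T^M(t)$.

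For $t\geq T$, I would condition on the first maintenance epoch $R_1$ and partition the sample space into three cases: (a) $R_1=kT$ with $k\leq \lfloor t/T\rfloor$ and this is a PM; (b) $R_1=kT$ with $k\leq \lfloor t/T\rfloor$ and this is a CM; (c) $R_1>t$. In case (a) the system is renewed at $kT$ and, by the strong renewal property, the conditional probability of remaining reliable on $(kT,t]$ equals $R_T^M(t-kT)$, contributing $R_T^M(t-kT)P_{R_{1,p}}^M(kT)$. In case (b) the system had already failed before $kT$, so reliability on $(0,t]$ has been violated and the contribution is $0$; this is why the CM probabilities $P_{R_{1,c}}^M(kT)$ do not appear in (\ref{Rt}), in contrast with the availability recursion (\ref{At}) where a CM restores the system. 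In case (c) no renewal has occurred by $t$, so $O\equiv X$ on $(0,t]$; by monotonicity, reliability on $(0,t]$ is equivalent to being up at $t$ conditional on $R_1>t$, which is exactly the quantity computed in the proof of Theorem \ref{availability} as $J_{T,1}^M(t)\mathbf{1}_{\{M\leq M_s\}}+J_{T,2}^M(t)\mathbf{1}_{\{M> M_s\}}$.

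The main technical obstacle is confirming that the no-renewal contribution really equals the same $J_{T,i}^M$ used for availability. One has to verify that the integration range $\lfloor t/T\rfloor T<u\leq t$ in (\ref{Jt1})--(\ref{Jt2}) enforces $\sigma_M>\lfloor t/T\rfloor T$, that is, the deterioration has not reached $M$ at any past inspection (so no PM has been triggered), and that the shock-intensity split at $M_s$ is handled correctly through $\bar F_1$ and $I(u,v)$ in both the $M\leq M_s$ and $M>M_s$ regimes. Once the reduction $\{O(u)<L,\ \forall u\leq t\}\cap\{N_s(0,t)=0\}\cap\{R_1>t\}=\{X(t)<L\}\cap\{N_s(0,t)=0\}\cap\{R_1>t\}$ is established via monotonicity of $X$, combining the three cases yields (\ref{Rt}), and the initial condition $R_T^M(0)=1$ is immediate.
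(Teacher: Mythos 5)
Your proposal is correct and follows essentially the same route as the paper's Appendix D: reduce the base case to availability via the absence of maintenance before $T$, then condition on the first replacement, keeping only the preventive-replacement probabilities in the convolution (since a corrective replacement means the system has already failed) and identifying the no-renewal contribution with the $J_{T,1}^M$, $J_{T,2}^M$ terms from the availability proof. Your explicit remark that the CM branch contributes zero is left implicit in the paper but is the same argument.
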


\begin{proof}
It is given in Appendix D.
\end{proof}

A performance measure that extends the reliability is the interval reliability, defined as the probability that the system is working at time $t$, and will continue working over a finite time interval of length $s$. The joint interval reliability is applied when there are periods in the lifetime cycle when a failure should be avoided with high probability. Let $IR_T^M(t,t+s)$ be the interval reliability in $(t,t+s]$. That is
\begin{eqnarray*}
&& IR_T^M(t,t+s) \\
&& =P\left[O(u)<L, \forall u \in (t,t+s], N_s(t,t+s)=0\right].
\end{eqnarray*}
Availability and reliability are particular cases of the interval reliability since 
$$A_T^M(t)=IR_T^M(t,t+0),$$
and
$$R_T^M(t)=IR_T^M(0,0+t).$$

Based on Theorems \ref{availability} and \ref{reliability}, the following result is obtained.

\begin{theorem}\label{reliabilityinterval}
For $t+s<T$, $IR_T^M(t,t+s)$ is given by
\begin{equation*}\label{IR0t}
\begin{array}{l}
\begin{aligned}
IR_T^M(t,t+s)=&R_T^M(t+s).\\
\end{aligned}
\end{array}
\end{equation*}
For $t+s\geq T$, $IR_T^M(t,t+s)$ fulfils the following Markov renewal equation
\begin{equation}\label{IRt}
\begin{array}{l}
\begin{aligned}
IR_T^M(t,t+s)=&\displaystyle \sum_{k=\lfloor t/T\rfloor+1}^{\lfloor (t+s)/T\rfloor}R_T^M(t+s-kT)P_{R_{1,p}}^M(kT)\\
+&\displaystyle \sum_{k=1}^{\lfloor t/T\rfloor}IR_T^M(t-kT,t+s-kT)P_{R_{1}}^M(kT)\\
+&J_{T,1}^M(t+s)\mathbf{1}_{\left\{M\leq M_s\right\}}+J_{T,2}^M(t+s)\mathbf{1}_{\left\{M> M_s\right\}},
\end{aligned}
\end{array}
\end{equation}
with initial conditions $R_T^M(0)=1$ and $IR_T^M(0,0)=1$.
\end{theorem}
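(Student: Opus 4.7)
The plan is to condition on the first maintenance action and use the regenerative structure of the maintained process, mirroring the strategies that established Theorems~\ref{availability} and \ref{reliability} for $A_T^M(t)$ and $R_T^M(t)$.

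For $t+s<T$ no inspection has taken place, so no maintenance has occurred in $(0,t+s]$. Consequently, the event ``the system is working throughout $(t,t+s]$'' coincides with the event ``the system has neither crossed $L$ nor received a shock on $(0,t+s]$'', which is exactly $R_T^M(t+s)$ by the definition used in Theorem~\ref{reliability}. This disposes of the first case.

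For $t+s\geq T$, I condition on the first renewal epoch $R_1\in\{T,2T,\ldots\}\cup\{\infty\}$ and split the sum by where $R_1$ falls relative to $t$ and $t+s$. If $R_1=kT$ with $k\leq\lfloor t/T\rfloor$, the system is renewed at $kT\leq t$ (regardless of whether that renewal is preventive or corrective), so the strong renewal property gives the contribution $P_{R_1}^M(kT)\,IR_T^M(t-kT,t+s-kT)$. If $R_1=kT$ with $\lfloor t/T\rfloor<k\leq \lfloor(t+s)/T\rfloor$, I must rule out the corrective case: any corrective renewal at $kT$ requires the system to be failed somewhere in $((k-1)T,kT]$, hence failed either at $t$ (if the failure precedes $t$) or on $(t,kT)\subset(t,t+s]$, making $\{$working throughout $(t,t+s]\}$ impossible. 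Thus only the preventive sub-case contributes: the system works on $(0,kT]$, renews at $kT$, and the renewed system must stay up on $(0,t+s-kT]$, giving $P_{R_{1,p}}^M(kT)\,R_T^M(t+s-kT)$. Finally, on $\{R_1>t+s\}$ the system has undergone no maintenance in $(0,\lfloor(t+s)/T\rfloor T]$ and, together with ``working on $(t,t+s]$'', this forces the system to be up throughout $(0,t+s]$ with $X$ not crossing $M$ by any of the inspections $jT\leq\lfloor(t+s)/T\rfloor T$; this is precisely the quantity captured by $J_{T,1}^M(t+s)$ when $M\leq M_s$ and by $J_{T,2}^M(t+s)$ when $M>M_s$, as already verified in the proof of Theorem~\ref{availability}. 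Summing the three contributions yields (\ref{IRt}).

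The main obstacle is the middle case: one must carefully argue that a corrective renewal strictly between $t$ and $t+s$ contributes zero probability, and that the preventive renewal combines the pre-renewal survival on $(0,kT]$ with the post-renewal reliability $R_T^M(t+s-kT)$ in a single multiplicative term. A secondary, more technical check is that the ``no-renewal'' term really is the $J$ expression at $t+s$: this uses the fact that any failure in $(0,\lfloor(t+s)/T\rfloor T]$ would force a corrective action at the next inspection (still before $t+s$), so $\{R_1>t+s\}\cap\{\text{working on }(t,t+s]\}$ is equivalent to $\{$no crossing of $M$ at any $jT\leq \lfloor(t+s)/T\rfloor T\}\cap\{X(\cdot)<L,\ N_s(0,t+s)=0\}$, which is the event whose probability defines $J_{T,1}^M(t+s)$ or $J_{T,2}^M(t+s)$. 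The boundary conditions $R_T^M(0)=1$ and $IR_T^M(0,0)=1$ are immediate from the definitions.
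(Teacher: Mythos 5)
Your proposal is correct and follows essentially the same route as the paper's Appendix E: condition on the first replacement epoch, split into the three ranges $R_1\leq t$, $t<R_1\leq t+s$, $R_1> t+s$, and identify the contributions as $IR_T^M(t-kT,t+s-kT)P_{R_1}^M(kT)$, $R_T^M(t+s-kT)P_{R_{1,p}}^M(kT)$, and the $J$-terms respectively. Your explicit justification that a corrective replacement in $(t,t+s]$ contributes zero probability is a point the paper leaves implicit, but the decomposition and the resulting renewal equation are identical.
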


\begin{proof}
It is given in Appendix E.
\end{proof}

\section{Numerical examples}
In this section, some numerical examples are provided to illustrate the analytical results. To this end, we consider a system subject to an underlying degradation process modelled as a homogeneous gamma process with parameters $\alpha=\beta=0.1$. We assume that the system fails when the deterioration level of the system reaches the breakdown threshold $L=30$. The system can also fail due to a sudden shock and the sudden shock process is modelled under a DSPP with intensity
$$\lambda\left(t, X(t)\right)=0.01\cdot\mathbf{1}_{\left\{X(t)\leq M_s\right\}}+0.1\cdot\mathbf{1}_{\left\{X(t)> M_s\right\}},\quad t\geq0, 
$$
where $M_s=20$. 
Under these conditions, the expected time to the failure due to deterioration is $E\left[\sigma_L\right]=34.0335$ $t.u.$ and the expected time to the failure due to a sudden shock is $E\left[Y\right]=28.3556$ $t.u.$
In addition, we assume the cost sequence $C_c=300$ $m.u.$, $C_p=150$ $m.u.$, $C_I=45$ $m.u.$, and $C_d=25$ $m.u./t.u.$ We assume that the life cycle of the system is $(0,50]$.  

The following examples were executed by using MATLAB software version R2014a on an Intel Core i5-2500 processor with 8GB DDR3 RAM, running under Windows 7 Professional.
\subsection{Asymptotic expected cost rate analysis}\label{two-dimensional_aymptotic}
Considering the previous dataset, the expected cost rate is first computed to establish the values of $T$ and $M$ which will be used subsequently in the transient approach analysis. In this way, the results obtained by using asymptotic and transient approaches shall be compared.

The optimisation problem for the expected cost based on the asymptotic formula given in (\ref{asymptotic_cost_rate}) is computed as follows:

\begin{enumerate}
\item A grid of size $10$ is obtained discretising the set $[5,50]$ into $10$ equally spaced points from $5$ to $50$ for $T$. For $i=1,2,\ldots,10$, let $T_i$ be the $i$-th value of the grid obtained previously.
\item A grid of size $30$ is obtained by discretising the set $[1,30]$ into $30$ equally spaced points from $1$ to $30$ for $M$. For $j=1,2,\ldots,30$, let $M_j$ be the $j$-th value of the grid obtained previously.
\item For each fixed combination $(T_i,M_j)$, we obtain $50000$ simulations of $(R_1,I_1,W_d)$, where $R_1$ corresponds to the time to the first replacement (corrective or preventive), $I_1$ the nature of the first maintenance action performed (corrective or preventive), and $W_d$ the downtime up to the first maintenance action. 

With these simulations, and applying Monte Carlo method, we obtain $\tilde{P}_{R_{1,p}}^{M_j}(kT_i)$, $\tilde{P}_{R_{1,c}}^{M_j}(kT_i)$, $\tilde{P}_{R_1}^{M_j}(kT_i)$, and $\tilde{E}\left[W_{T_i}^{M_j}((k-1)T_i,kT_i))\right]$ which correspond to the estimations of $P_{R_{1,p}}^{M_j}(kT_i))$, $P_{R_{1,c}}^{M_j}(kT_i))$, $P_{R_1}^{M_j}(kT_i))$, and $E\left[W_{T_i}^{M_j}((k-1)T_i,kT_i))\right]$ for $k=1,2,\ldots,\lfloor 50/T_i\rfloor$, respectively (see Fig. \ref{flujos_variando_T_M}).
\begin{figure}[h]
\begin{center}
\includegraphics[scale=0.45]{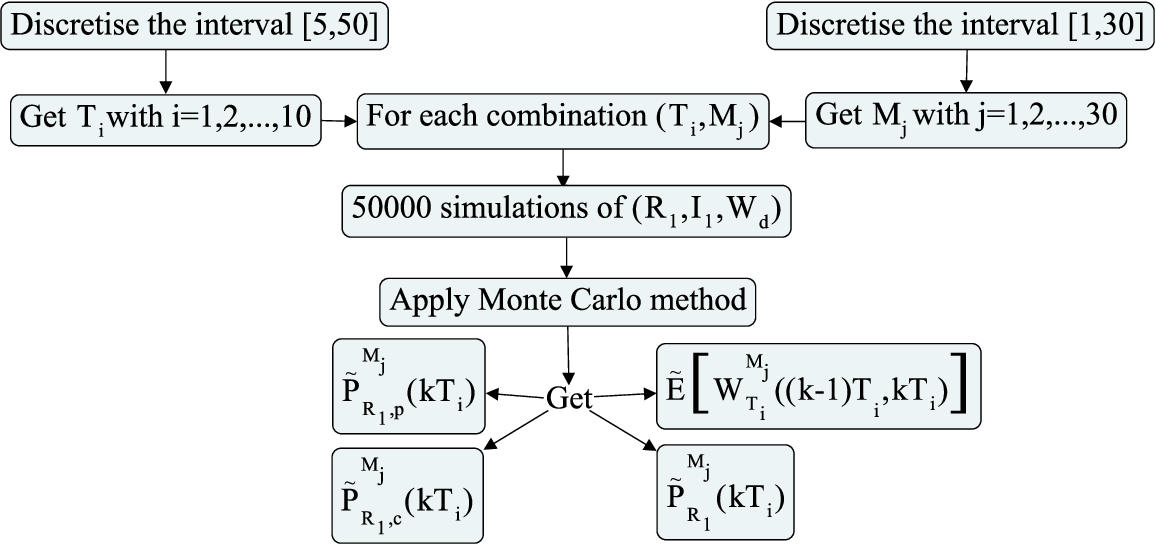}
\caption{Procedure of the Monte Carlo simulation method for variable $T$ and $M$.}\label{flujos_variando_T_M}
\end{center}
\end{figure}

\item Quantity $\tilde{C}^{\infty}(T,M)$, which represents the asymptotic expected cost rate, is calculated by using Equation (\ref{asymptotic_cost_rate}) replacing the corresponding probabilities by their estimations calculated in Step 3.

\item The optimisation problem is reduced to find the values $T_{opt}$ and $M_{opt}$ which minimise the asymptotic expected cost rate $\tilde{C}^{\infty}(T,M)$. That is 
\begin{equation*}
\tilde{C}^{\infty}(T_{opt},M_{opt})=\min_{\substack{T\geq 0 \\ 0\leq M\leq L}}\left\{\tilde{C}^{\infty}(T,M)\right\}.
\end{equation*}

\end{enumerate}

\begin{figure*}
\centering
\begin{minipage}[H]{.45\textwidth}
\begin{center}
\includegraphics[scale=0.45]{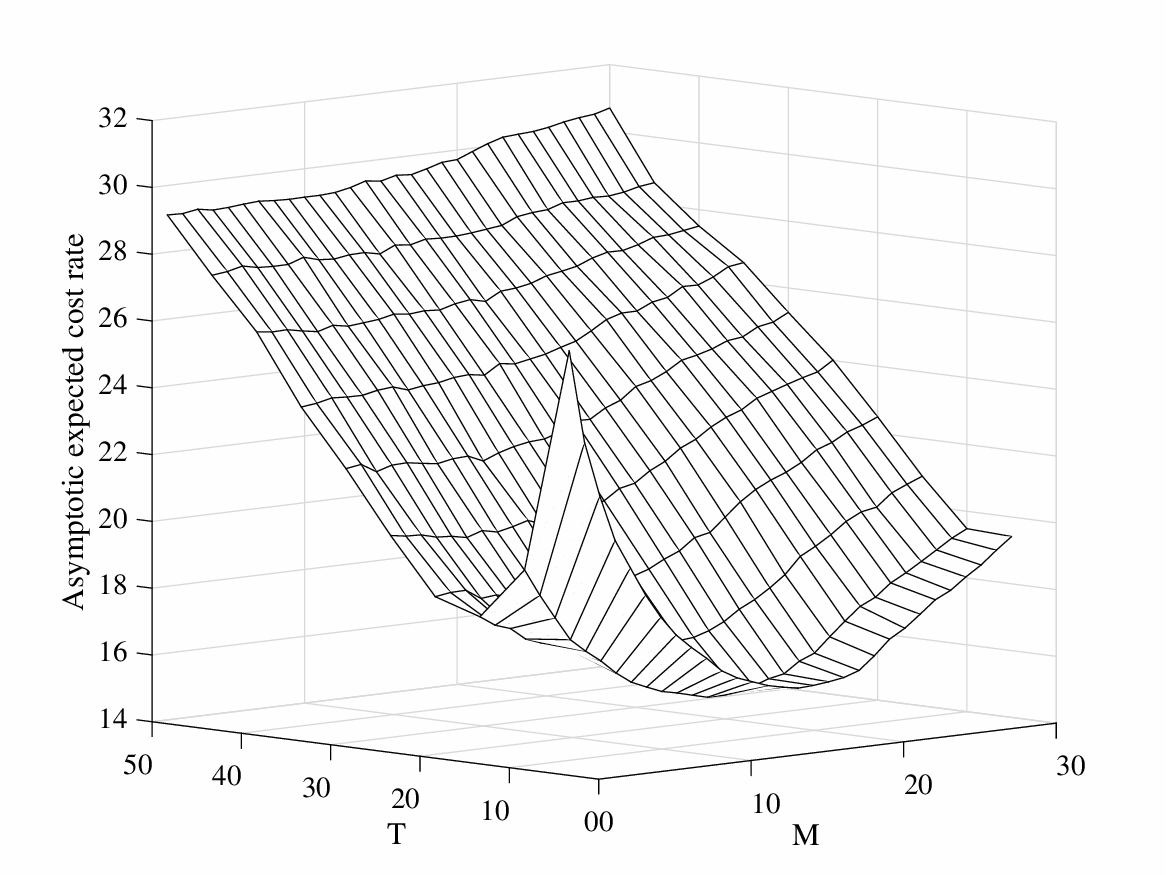}
\end{center}
\end{minipage}
\hfill
\begin{minipage}[H]{.45\textwidth}
\begin{center}
\includegraphics[scale=0.45]{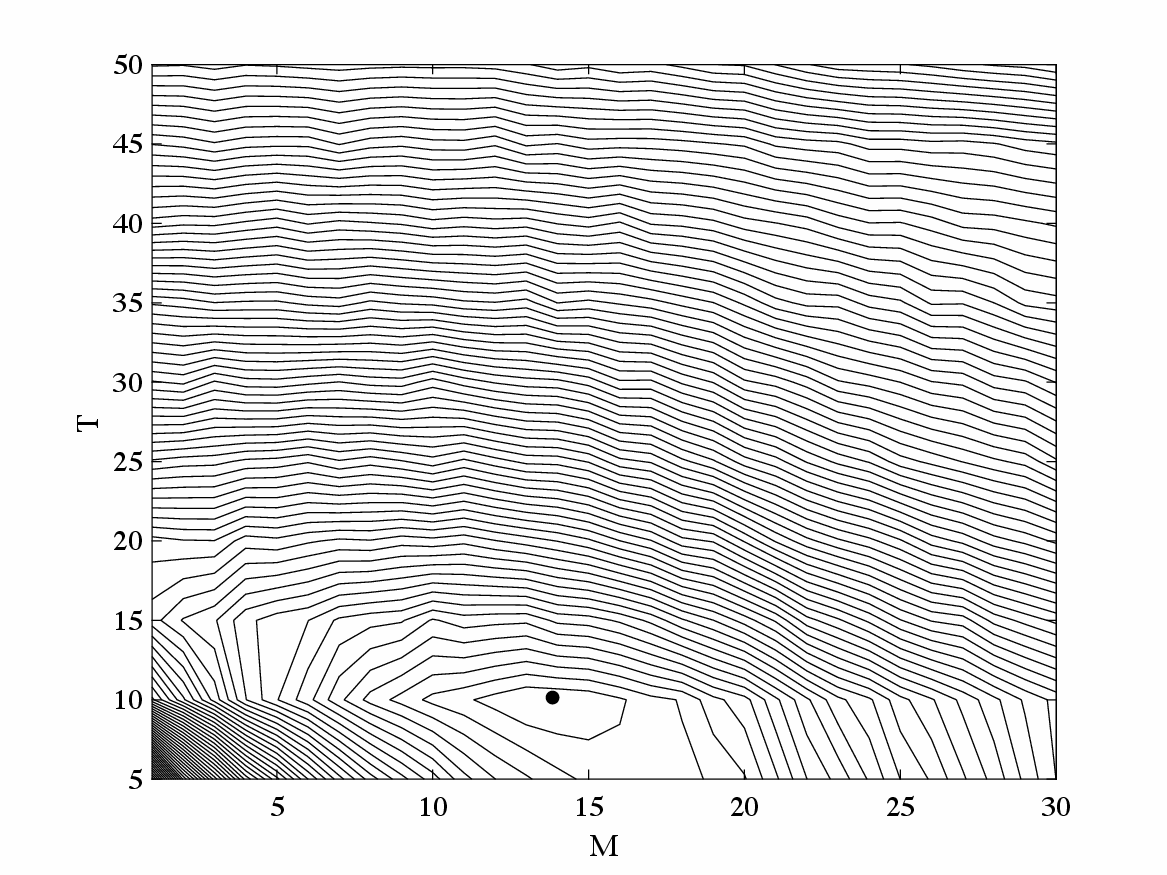}
\end{center}
\end{minipage}
\caption{\label{contour_mesh_asymptotic_cost_rate}Mesh and contour plots for the asymptotic expected cost rate.}
\end{figure*}

Fig. \ref{contour_mesh_asymptotic_cost_rate} shows the expected cost rate versus $T$ and $M$. The values of $T$ and $M$ which minimise the expected cost rate are reached at $M_{opt}=14$ $d.u.$ and $T_{opt}=10$ $t.u.$, with an expected cost rate of $15.3819$ $m.u./t.u.$ Below, the expected cost in the finite life cycle will be compared to the asymptotic expected cost using the values $T_{opt}$ and $M_{opt}$.

\subsection{Expected transient cost rate analysis for a fixed $T$}\label{Section_T_fixed}
We consider a time between inspections $T=10$ $t.u.$ The optimisation problem for the expected transient cost based on the recursive formula given in (\ref{ECt}) is computed as follows:
\begin{enumerate}\label{procedimiento_recursivo_variandoM}

\item A grid of size $30$ is obtained by discretising the set $[1,30]$ into $30$ equally spaced points from $1$ to $30$ for $M$. 

\item For fixed $T=10$ $t.u.$ and for each fixed $M_j$, we obtain $50000$ simulations of $(R_1,I_1,W_d)$.  With these simulations, and applying Monte Carlo method, we obtain the estimations $\tilde{P}_{R_{1,p}}^{M_j}(10k)$, $\tilde{P}_{R_{1,c}}^{M_j}(10k)$, $\tilde{P}_{R_1}^{M_j}(10k)$, and $\tilde{E}\left[W_{10}^{M_j}((k-1)10,10k)\right]$ (see Fig. \ref{flujos_variando_T_M}).

\item For fixed $T=10$ $t.u.$ and $M_j$, let $\tilde{E}\left[C_{10}^{M_j}(50)\right]$ be the expected cost in the finite life cycle. The expected transient cost is calculated by using the recursive formula given in (\ref{ECt}), replacing the corresponding probabilities by their estimations calculated in Step 2, with initial condition $\tilde{E}\left[C_{10}^{M_j}(0)\right]=0$. 

\item For fixed $T=10$ $t.u.$, the optimisation problem is reduced to find the value $M_{opt}$ which minimises the expected cost $\tilde{E}\left[C_{10}^{M}(50)\right]$. That is
\begin{equation*}
\tilde{E}\left[C_{10}^{M_{opt}}(50)\right]=\min_{0\leq M\leq L}\left\{\tilde{E}\left[C_{10}^{M}(50)\right]\right\}.
\end{equation*} 

\end{enumerate}

Let $\tilde{E}\left[C_{10}^{M_j}(50)\right]/50$ be the expected cost rate in the life cycle of the system. The expected cost rate calculated using the recursive method and the expected cost rate calculated using strictly Monte Carlo simulation are shown in Fig. \ref{costes_generalM}. The expected transient cost rate based on strictly Monte Carlo simulation was calculated for $30$ equally spaced points in the interval $(0,30]$ with $50000$ simulations for each point. Based on Fig. \ref{costes_generalM}, for the recursive method, the expected transient cost rate based on the recursive method given by Theorem \ref{theorem_expected_cost} reaches its minimum value at $M=14$ $d.u.$, with an expected transient cost rate of $14.7639$ $m.u./t.u.$ On the other hand, the expected transient cost rate based on strictly Monte Carlo simulation reaches its minimum value at $M=14$ $d.u.$, with an expected transient cost rate of $15.2096$ $m.u./t.u.$
\begin{figure}[h]
\centering
\includegraphics[scale=0.30]{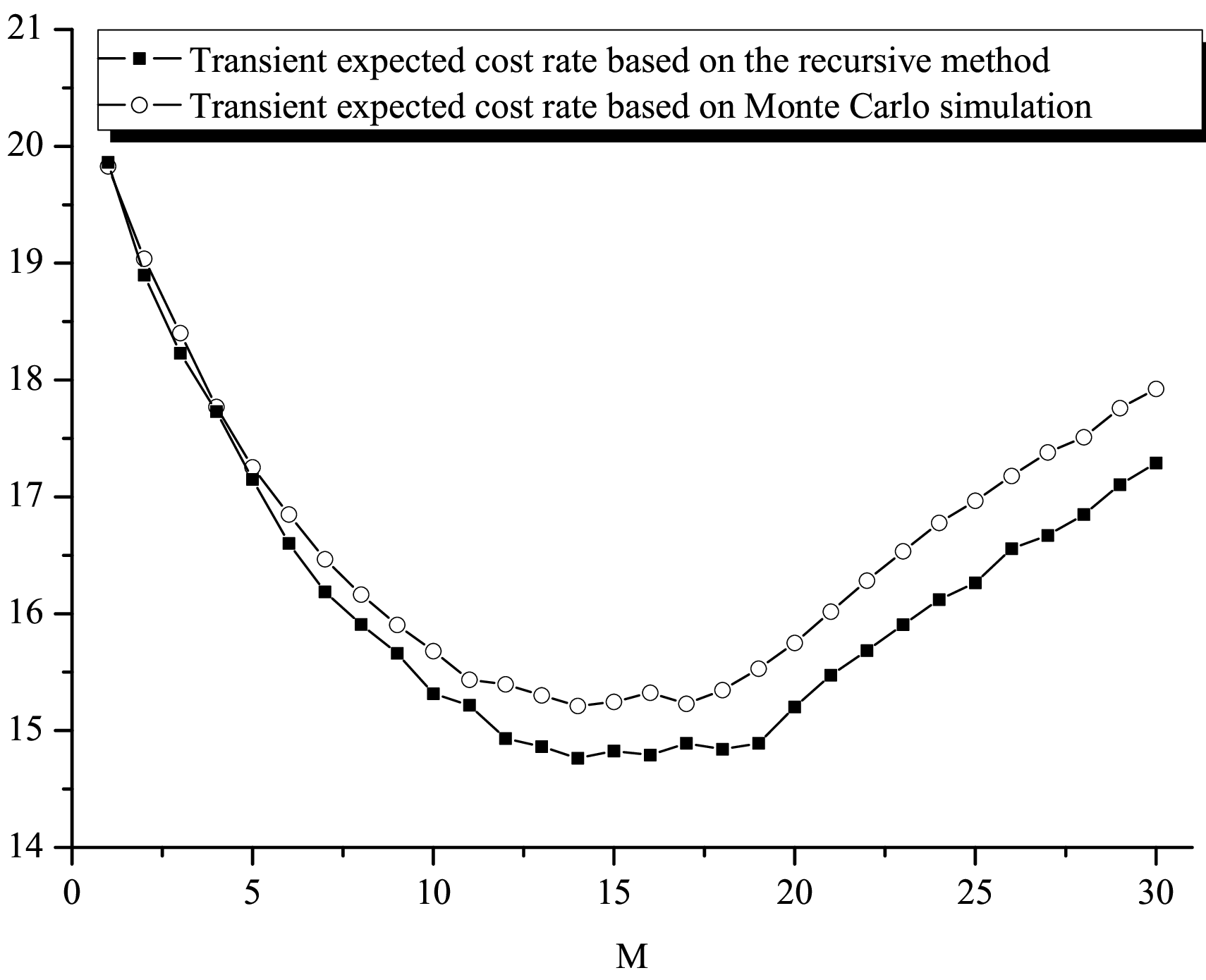}
\caption{Expected cost rate in the life cycle for different values of $M$.}\label{costes_generalM}
\end{figure}

For $T=10$ $t.u.$, Fig. \ref{asintoticoM} shows the expected transient cost rate calculated by using the recursive method and the asymptotic expected cost rate calculated throughout the procedure detailed in Section \ref{two-dimensional_aymptotic} versus $M$.
\begin{figure}[h]
\begin{center}
\includegraphics[scale=0.30]{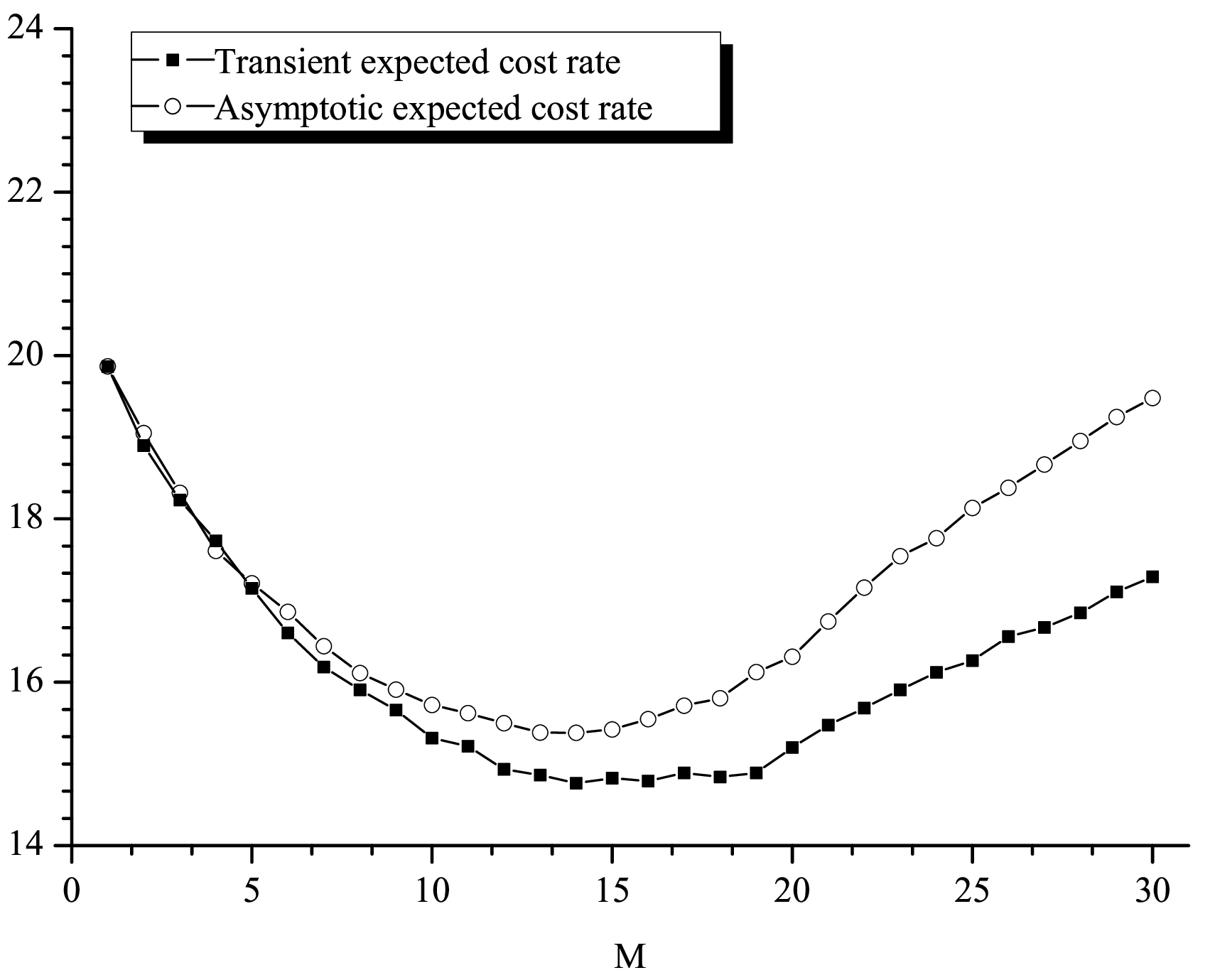}
\caption{Expected transient cost rate and asymptotic expected cost rate at time $t_f=50$ $t.u.$ for different values of $M$.}\label{asintoticoM}
\end{center}
\end{figure}
As we said previously, the value of $M$ which minimises the expected transient cost rate is reached at $M=14$ $d.u.$, with an expected cost rate of $14.7639$ $m.u./t.u.$ On the other hand, the asymptotic expected cost rate reaches its minimum value at $M=14$ $d.u.$, with an expected cost rate of $15.3819$ $m.u./t.u.$

Now, we calculate the standard deviation of the cost. Fig. \ref{sd_M} shows the expected transient cost rate with its standard deviation associated given by Equation (\ref{standard_deviation}). Both quatities were calculated for $30$ equally spaced points in $(0,30]$ by using the recursive formula given in (\ref{Bt}) throughout the steps detailed in \ref{procedimiento_recursivo_variandoM}.
\begin{figure}[h]
\begin{center}
\includegraphics[scale=0.30]{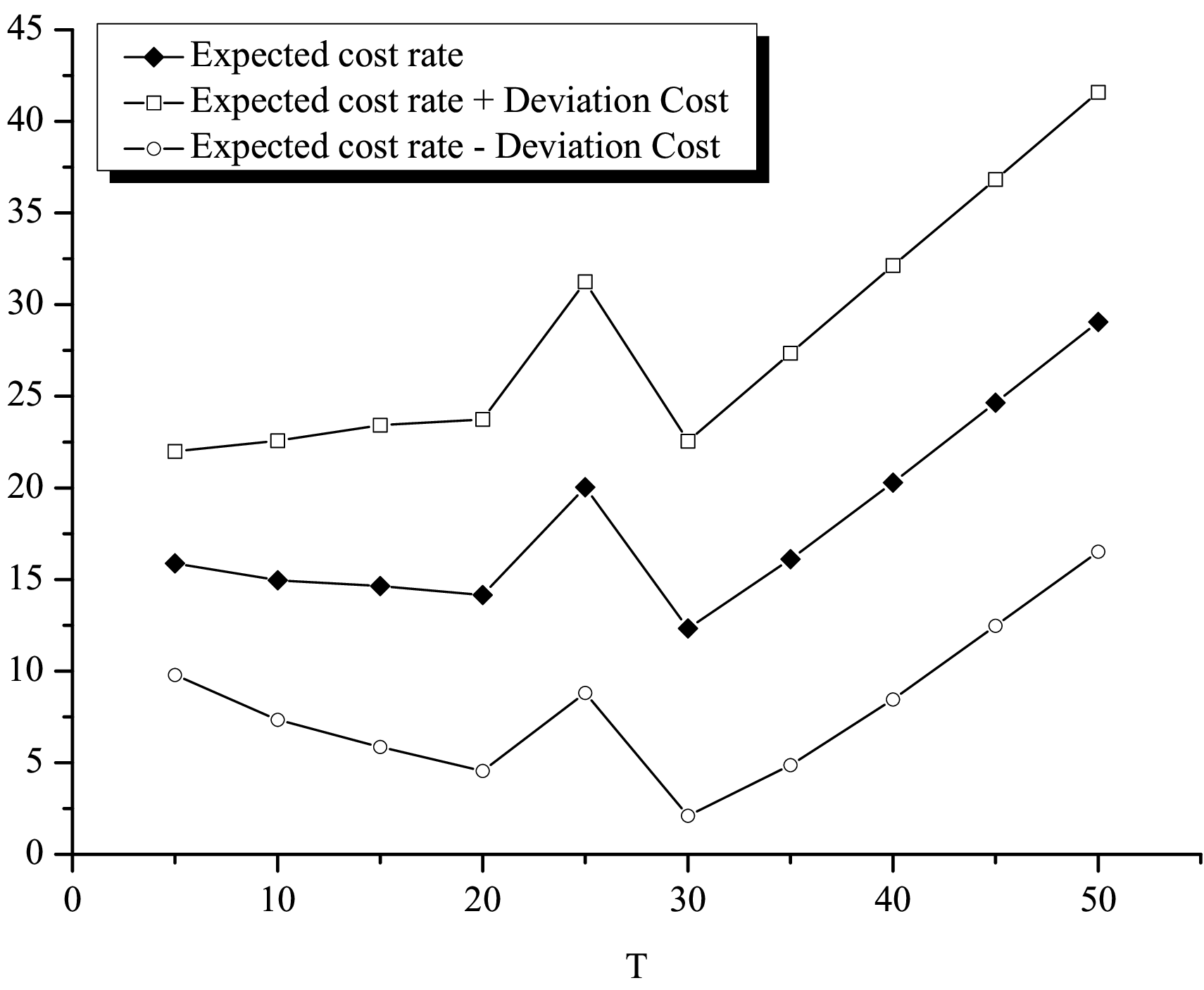}
\caption{Expected cost rate in the life cycle and standard deviation versus $M$.}\label{sd_M}
\end{center}
\end{figure}

Now, we focus on the influence of the main model parameters on the expected cost in the life cycle. Firstly, a sensitivity analysis of the gamma process
parameters is performed.

The values of the gamma process parameters are modified according to
the following specifications:
\begin{equation}\label{especificaciones_alpha_beta}
\alpha_{(v_i\%)}=\alpha\left[1+
\frac{v_i}{100}\right]\quad\text{and}\quad\beta_{(v_j\%)}=\beta\left[1+\frac{v_j}{100}\right],
\end{equation}
where $v_i$ and $v_j$  are, respectively, the $i$-th and $j$-th
position of the vector $\mathbf{v}=(-10$, $-5$, $-1$, $0$, $1$, $5$,
$10$). Then, the parameter values for $\alpha$ and $\beta$ can be
simultaneous and independently modified both for increasing and
decreasing changes.

Let $\tilde{E}\left[C_{10,\alpha_{(v_i\%)},\beta_{(v_j\%)}}^{M}(t_f)\right]$ be the minimal expected transient cost at time $t_f$ obtained when the gamma process parameters ($\alpha$ and $\beta$) are varied according to the specifications given in (\ref{especificaciones_alpha_beta}). The expected transient cost for each combination of $\alpha_{(v_i\%)}$ and $\beta_{(v_j\%)}$ are calculated based on the recursive method following the steps detailed in \ref{procedimiento_recursivo_variandoM}. The relative measure $V_{10,\alpha_{(v_i\%)},\beta_{(v_j\%)}}^{M}(50)$ is defined as
\begin{equation}\label{variacion_alpha_beta}
\frac{\left|\tilde{E}\left[C_{10}^{M_{opt}}(50)\right]-\tilde{E}\left[C_{10,\alpha_{(v_i\%)},\beta_{(v_j\%)}}^{M}(50)\right]\right|}{\tilde{E}\left[C_{10}^{M_{opt}}(50)\right]},
\end{equation}
where $\tilde{E}\left[C_{10}^{M_{opt}}(50)\right]$ is the minimal expected transient cost calculated in \ref{procedimiento_recursivo_variandoM}.

For fixed $i$ and $j$, $V_{10,\alpha_{(v_i\%)},\beta_{(v_j\%)}}^{M}(50)$
measures the relative difference between the minimal expected transient cost with the original parameter values
and the minimal expected transient cost calculated using the modified
parameter values for fixed $M$ and $T=10$ $t.u.$ Values closer to zero have a lower influence
on the expected transient cost rate.

Table \ref{variation_alpha_beta_Tfixed} shows the relative variation percentages with a shaded grey
scale. Each cell represents $V_{10,\alpha_{(v_i\%)},\beta_{(v_j\%)}}^{M}(50)$ expressed in percentage. Darker colours of cells denote a higher relative
variation percentage. The results obtained show that $V_{10,\alpha_{(v_i\%)},\beta_{(v_j\%)}}^{M}(50)$ grows when $\alpha$ increases and $\beta$ decreases and $V_{10,\alpha_{(v_i\%)},\beta_{(v_j\%)}}^{M}(50)$ decreases when $\alpha$ decreases and $\beta$ increases. In this way, $V_{10,\alpha_{(v_i\%)},\beta_{(v_j\%)}}^{M}(50)$ reaches its minimum value when $\alpha$ is minimum and $\beta$ is maximum and  its maximum value when $\alpha$ is maximum and $\beta$ is minimum.

\begin{table}
\begin{minipage}[b]{\linewidth}\centering
\renewcommand{\arraystretch}{2}
\scalebox{0.68}{
\begin{tabular}{l|ccccccc|}
\cline{2-8} & $\beta_{_{(-10\%)}}$ & $\beta_{_{(-5\%)}}$ &
$\beta_{_{(-1\%)}}$ & $\beta$ & $\beta_{_{(1\%)}}$ &
$\beta_{_{(5\%)}}$ &
$\beta_{_{(10\%)}}$  \\
\hline

\multicolumn{1}{|l|}{$\alpha_{(-10\%)}$} & \cellcolor[gray]{0.961218} 1.1862 & \cellcolor[gray]{0.925243} 2.2865 & \cellcolor[gray]{0.809380} 5.8303 & \cellcolor[gray]{0.812387} 5.7383 & \cellcolor[gray]{0.789536} 6.4372 & \cellcolor[gray]{0.716511} 8.6707 & \cellcolor[gray]{0.638045} 11.0707 \\

\multicolumn{1}{|l|}{$\alpha_{(-5\%)}$} & \cellcolor[gray]{0.865731} 4.1067 & \cellcolor[gray]{0.968033} 0.9778 & \cellcolor[gray]{0.935804} 1.9635 & \cellcolor[gray]{0.910406} 2.7403 & \cellcolor[gray]{0.895126} 3.2077 & \cellcolor[gray]{0.801052} 6.0850 & \cellcolor[gray]{0.703849} 9.0580 \\

\multicolumn{1}{|l|}{$\alpha_{(-1\%)}$} &  \cellcolor[gray]{0.770864} 7.0083 & \cellcolor[gray]{0.910626} 2.7336 & \cellcolor[gray]{0.979162} 0.6373 & \cellcolor[gray]{0.988013} 0.3666 & \cellcolor[gray]{0.964349} 1.0904 & \cellcolor[gray]{0.881929} 3.6113 & \cellcolor[gray]{0.799320} 6.1380 \\

\multicolumn{1}{|l|}{$\alpha$} & \cellcolor[gray]{0.737828} 8.0187 & \cellcolor[gray]{0.874935} 3.8252 & \cellcolor[gray]{0.965595} 1.0523 & \cellcolor[gray]{1.000000} 0.0000 & \cellcolor[gray]{0.991420} 0.2624 & \cellcolor[gray]{0.908351} 2.8032 & \cellcolor[gray]{0.802757} 6.0328 \\

\multicolumn{1}{|l|}{$\alpha_{(1\%)}$} & \cellcolor[gray]{0.715212} 8.7105 & \cellcolor[gray]{0.850501} 4.5726 & \cellcolor[gray]{0.952568} 1.4507 & \cellcolor[gray]{0.965029} 1.0696 & \cellcolor[gray]{0.999128} 0.0267 & \cellcolor[gray]{0.913896} 2.6336 & \cellcolor[gray]{0.832088} 5.1357 \\

\multicolumn{1}{|l|}{$\alpha_{(5\%)}$} & \cellcolor[gray]{0.630547} 11.3000 & \cellcolor[gray]{0.768347} 7.0853 & \cellcolor[gray]{0.867416} 4.0552 & \cellcolor[gray]{0.880496} 3.6551 & \cellcolor[gray]{0.910915} 2.7247 & \cellcolor[gray]{0.992336} 0.2344 & \cellcolor[gray]{0.909770} 2.7598 \\

\multicolumn{1}{|l|}{$\alpha_{(10\%)}$} & \cellcolor[gray]{0.500000} 15.2929 & \cellcolor[gray]{0.654189} 10.5769 & \cellcolor[gray]{0.762858} 7.2532 & \cellcolor[gray]{0.781977} 6.6684 & \cellcolor[gray]{0.811460} 5.7666 & \cellcolor[gray]{0.899942} 3.0604 & \cellcolor[gray]{0.996866} 0.0958 \\
\hline
\end{tabular}}
\caption{Relative variation percentages for the expected transient cost for the gamma process parameters for a fixed $T=10$ $t.u.$}\label{variation_alpha_beta_Tfixed} \vspace{1cm}
\end{minipage}
\end{table}

By modifying $\pm 1\%$ around $\alpha=\beta=0.1$, the relative variation percentages are small. The
results also show that the relative variation percentages are lower
in the diagonal of the table. That means when the parameters
$\alpha$ and $\beta$ are modified in the same direction and
magnitude. 

Similarly, the values of the parameters $\lambda_1$ and $\lambda_2$ are
modified according to the following specifications:
\begin{equation}\label{especificaciones_lambda1_lambda2}
\lambda_{1,(v_i\%)}=\lambda_1\left[1+
\frac{v_i}{100}\right]\quad\text{and}\quad\lambda_{2,(v_j\%)}=\lambda_2\left[1+\frac{v_j}{100}\right],
\end{equation}

Let $\tilde{E}^*\left[C^M_{10,\lambda_{1,(v_i\%)},\lambda_{2,(v_j\%)}}(t_f)\right]$ be the minimal expected transient cost
obtained by varying the parameters $\lambda_1$ and $\lambda_2$
simultaneously as in the scheme given in (\ref{especificaciones_lambda1_lambda2}). Now, the relative variation $V^M_{10,\lambda_{1,(v_i\%)},\lambda_{2,(v_j\%)}}(50)$
is given by
\begin{equation}\label{variacion_choques}
\frac{\left|\tilde{E}^*\left[C_{10}^{M_{opt}}(50)\right]-\tilde{E}^*\left[C_{10,\lambda_{1,(v_i\%)},\lambda_{2,(v_j\%)}}^{M}(50)\right]\right|}{\tilde{E}^*\left[C_{10}^{M_{opt}}(50)\right]}.
\end{equation}

The relative variation percentages are presented in Table \ref{variation_lambda1_lambda2_Tfixed}. The results show that the parameter $\lambda_1$ has greater effects on $V^M_{10,\lambda_{1,(v_i\%)},\lambda_{2,(v_j\%)}}(50)$
than the parameter $\lambda_2$, reaching the lowest values when the variation for $\lambda_1=0.01$ is minimal, that is $\pm 1\%$, and the highest values when the variation for $\lambda_1$ is maximised, that is $\pm 10\%$.
\begin{table}
\begin{minipage}[b]{\linewidth}\centering
\renewcommand{\arraystretch}{2}
\scalebox{0.68}{
\begin{tabular}{l|ccccccc|}
\cline{2-8} & $\lambda_{2,(-10\%)}$ & $\lambda_{2,(-5\%)}$ & $\lambda_{2,(-1\%)}$ &
$\lambda_{2}$ & $\lambda_{2,(1\%)}$ & $\lambda_{2,(5\%)}$ &
$\lambda_{2,(10\%)}$  \\
\hline

\multicolumn{1}{|l|}{$\lambda_{1,(-10\%)}$} & \cellcolor[gray]{0.530281} 3.0794 & \cellcolor[gray]{0.622304} 2.4761 & \cellcolor[gray]{0.645110} 2.3266 & \cellcolor[gray]{0.704476} 1.9374 & \cellcolor[gray]{0.712491} 1.8849 & \cellcolor[gray]{0.690594} 2.0284 & \cellcolor[gray]{0.790972} 1.3703 \\

\multicolumn{1}{|l|}{$\lambda_{1,(-5\%)}$} & \cellcolor[gray]{0.754491} 1.6095 & \cellcolor[gray]{0.810563} 1.2419 & \cellcolor[gray]{0.792840} 1.3581 & \cellcolor[gray]{0.863547} 0.8946 & \cellcolor[gray]{0.862690} 0.9002 & \cellcolor[gray]{0.953351} 0.3058 & \cellcolor[gray]{0.942717} 0.3755 \\

\multicolumn{1}{|l|}{$\lambda_{1,(-1\%)}$} &  \cellcolor[gray]{0.906057} 0.6159 & \cellcolor[gray]{0.920268} 0.5227 & \cellcolor[gray]{0.965907} 0.2235 & \cellcolor[gray]{1.000000} 0.0000 & \cellcolor[gray]{0.918353} 0.5353 & \cellcolor[gray]{0.996322} 0.0241 & \cellcolor[gray]{0.857156} 0.9365 \\

\multicolumn{1}{|l|}{$\lambda_{1}$} & \cellcolor[gray]{0.929863} 0.4598 & \cellcolor[gray]{0.955764} 0.2900 & \cellcolor[gray]{0.965907} 0.2235 & \cellcolor[gray]{1.000000} 0.0000 & \cellcolor[gray]{0.918353} 0.5353 & \cellcolor[gray]{0.996322} 0.0241 & \cellcolor[gray]{0.857156} 0.9365 \\

\multicolumn{1}{|l|}{$\lambda_{1,(1\%)}$} & \cellcolor[gray]{0.929863} 0.4598 & \cellcolor[gray]{0.962607} 0.2451 & \cellcolor[gray]{0.924558} 0.4946 & \cellcolor[gray]{0.998200} 0.0118 & \cellcolor[gray]{0.949824} 0.3289 & \cellcolor[gray]{0.839096} 1.0549 & \cellcolor[gray]{0.850018} 0.9833 \\

\multicolumn{1}{|l|}{$\lambda_{1,(5\%)}$} & \cellcolor[gray]{0.958839} 0.2698 & \cellcolor[gray]{0.943787} 0.3685 & \cellcolor[gray]{0.910834} 0.5846 & \cellcolor[gray]{0.888326} 0.7321 & \cellcolor[gray]{0.928035} 0.4718 & \cellcolor[gray]{0.872406} 0.8365 & \cellcolor[gray]{0.823762} 1.1554 \\

\multicolumn{1}{|l|}{$\lambda_{1,(10\%)}$} & \cellcolor[gray]{0.795894} 1.3381 & \cellcolor[gray]{0.642988} 2.3405 & \cellcolor[gray]{0.587459} 2.7045 & \cellcolor[gray]{0.656213} 2.2538 & \cellcolor[gray]{0.573455} 2.7964 & \cellcolor[gray]{0.609608} 2.5593 & \cellcolor[gray]{0.500000} 3.2779 \\
\hline
\end{tabular}}
\caption{Relative variation percentages for the expected transient cost for parameters $\lambda_1$ and $\lambda_2$ for a fixed $T=10$ $t.u.$}\label{variation_lambda1_lambda2_Tfixed}
\end{minipage}
\end{table}

\subsection{Analysis of the expected cost rate in the life cycle for a fixed $M$}\label{Section_M_fixed}
We now analyse the time between inspections $T$ influence on the expected cost in the life cycle of the system for a preventive threshold $M=14$ $d.u.$ As in Section \ref{procedimiento_recursivo_variandoM}, the optimisation problem for the expected transient cost based on the recursive formula given in (\ref{ECt})  is computed throughout the following steps:

\begin{enumerate} \label{procedimiento_recursivo_variandoT}
\item A grid of size $10$ is obtained by discretising the set $[5,50]$ into $10$ equally spaced points from $5$ to $50$ for the time between inspections $T$. 

\item For fixed $M=14$ $d.u.$ and for each fixed $T_i$, we obtain $50000$ simulations of $(R_1,I_1,W_d)$. With these simulations and applying Monte Carlo method, we obtain the estimations $\tilde{P}_{R_{1,p}}^{14}(kT_i)$, $\tilde{P}_{R_{1,c}}^{14}(kT_i)$, $\tilde{P}_{R_1}^{14}(kT_i)$, and $\tilde{E}\left[W_{T_i}^{14}((k-1)T_i,kT_i)\right]$ for $k=1,2,\ldots,\lfloor 50/T_i\rfloor$ (see Fig. \ref{flujos_variando_T_M}).

\item Let $\tilde{E}\left[C_{T_i}^{14}(50)\right]$ be the expected cost in the life cycle. The expected cost is calculated by using the recursive formula given in (\ref{ECt}), replacing the corresponding probabilities by their estimations calculated in Step 2, with initial condition $\tilde{E}\left[C_{T_i}^{14}(0)\right]=0$. 

\item For fixed $M=14$ $d.u.$, the optimisation problem is reduced to find the value of $T_{opt}$ which minimises $\tilde{E}\left[C_{T}^{14}(50)\right]$, that is,
\begin{equation*}
\tilde{E}\left[C_{T_{opt}}^{14}(50)\right]=\min_{T\geq 0}\left\{\tilde{E}\left[C_{T}^{14}(50)\right]\right\}.
\end{equation*} 

\end{enumerate}

Let $\tilde{E}\left[C_{T_i}^{14}(50)\right]/50$ be the expected cost rate in the life cycle of the system. The expected cost rate evaluated using the recursive formula and using Monte Carlo simulation are shown in Fig. \ref{costes_generalT}. The expected cost rate based on strictly Monte Carlo simulation was calculated for $10$ equally spaced points in the interval $(0,50]$ with $50000$ realizations for each point. 
\begin{figure}[h]
\begin{center}
\includegraphics[scale=0.30]{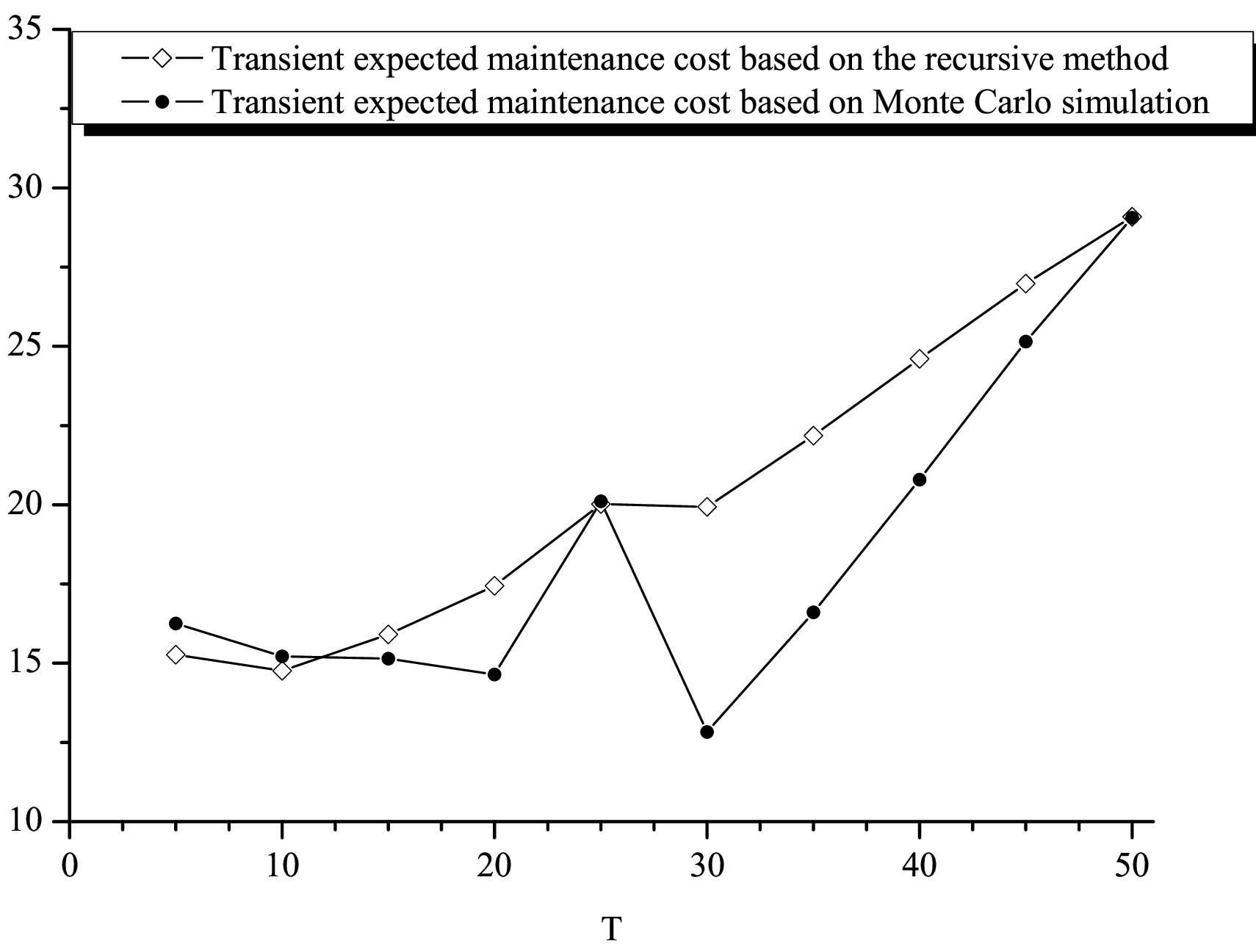}
\caption{Expected transient cost rate at time $t_f=50$ $t.u.$ versus $T$.}\label{costes_generalT}
\end{center}
\end{figure}
Based on Fig. \ref{costes_generalT}, for the recursive method, the expected cost rate at time $t_f=50$ $t.u.$  reaches its minimum value for $T=10$ $t.u.$, with an expected cost rate of $14.7637$ $m.u./t.u.$ On the other hand, using strictly Monte Carlo simulation, the expected cost rate at time $t_f=50$ $t.u.$ reaches its minimum value at $T=30$ $t.u.$, with an expected cost rate of $12.8252$ $m.u./t.u.$ This difference can be explained due to the high variance in deterioration increments of the degradation process.

For $M=14$ $d.u.$ Table \ref{numciclos_promedioT} shows the average number of renewals of the system in its life cycle for each value of $T$.

\begin{table}[h]
  \centering
  \scalebox{0.75}{
    \begin{tabular}{|c|c|c|c|c|c|}
    \hline
    $\mathbf{T}$ & \textbf{5}     & \textbf{10}     & \textbf{15}     & \textbf{20}    & \textbf{25} \\
    $\mathbf{E\left[N_{T}^{14}(50)\right]}$ & 2.4650 & 2.2007 & 1.7772 & 1.4327 & 1.6419 \\
    \hline
    \multicolumn{6}{c}{}\\
    \hline
    $\mathbf{T}$ & \textbf{30}    & \textbf{35}    & \textbf{40}    & \textbf{45}    & \textbf{50} \\

     $\mathbf{E\left[N_{T}^{14}(50)\right]}$ & 0.8884 & 0.93964 & 0.9682 & 0.9833 & 0.9926
 \\
    \hline
    \end{tabular}} 
    \caption{Average number of complete renewal cycles up to $t_f=50$ for different values of $T$.}
  \label{numciclos_promedioT}
\end{table}

For fixed $M=14$ $d.u.$, Fig. \ref{asintoticoT} shows the expected cost rate in the life cycle of the system calculated by using the recursive method and the asymptotic expected cost rate calculated throughout the procedure detailed in Section \ref{two-dimensional_aymptotic} versus $T$.
\begin{figure}[h]
\begin{center}
\includegraphics[scale=0.28]{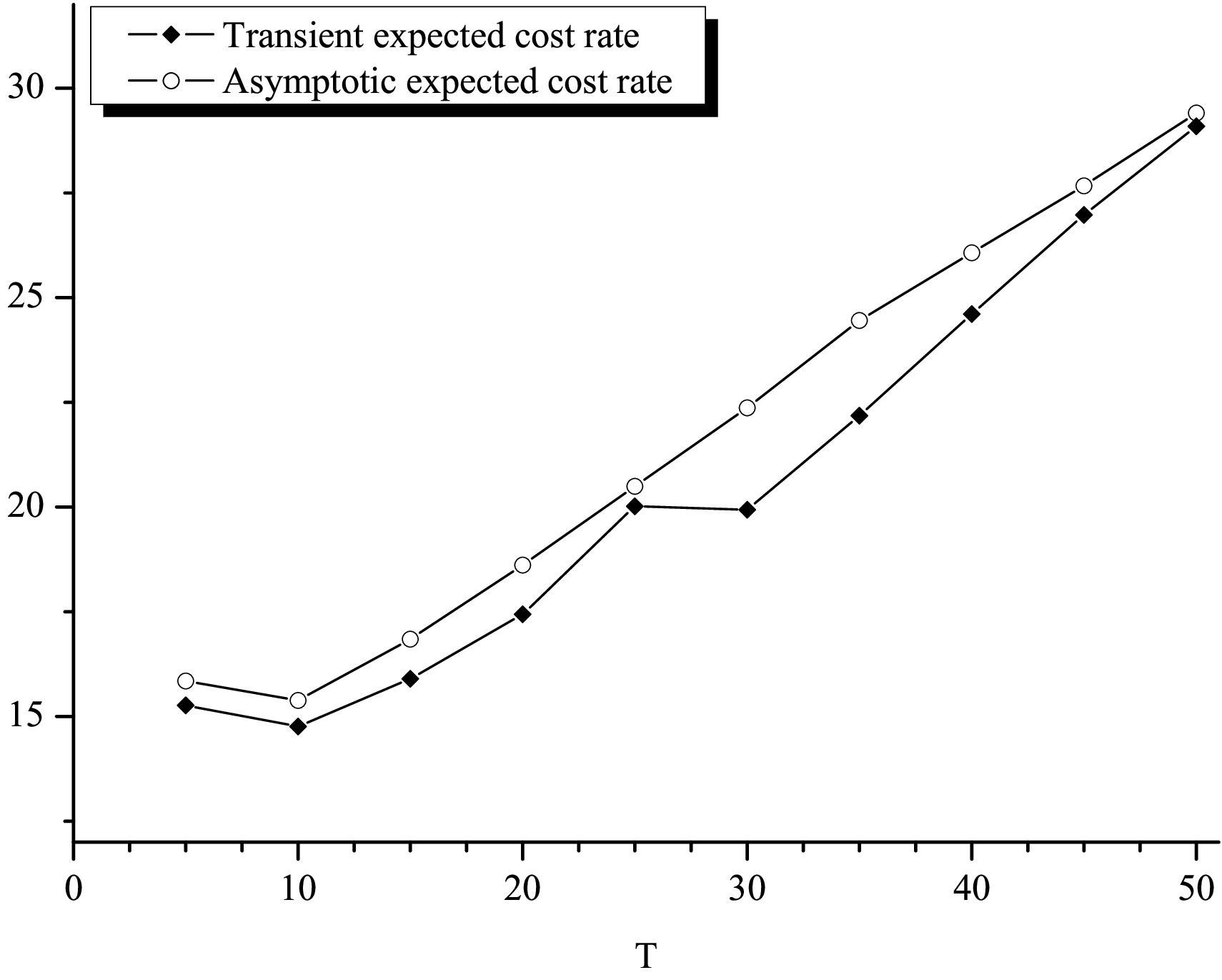}
\caption{Expected transient cost rate and asymptotic expected cost rate at time $t_f=50$ $t.u.$ for different values of $T$.}\label{asintoticoT}
\end{center}
\end{figure}
As we said previously, the expected transient cost rate calculated using the recursive method at time $t_f=50$ $t.u.$ reaches its minimum value at $T=10$ $t.u.$, with an expected transient cost rate of $14.7637$ $m.u./t.u.$ On the other hand, the asymptotic expected cost rate reaches its minimum value at $T=10$ $t.u.$, with an asymptotic expected cost rate of $15.3819$ $m.u./t.u.$ Asymptotic expected cost rate shows a smoother behaviour compared to the expected transient cost rate.

Next, the standard deviation of the cost is calculated. Fig. \ref{sd_T} shows the expected transient cost rate calculated using the recursive method with its standard deviation associated given in (\ref{standard_deviation}) versus $T$. This deviation was calculated for $10$ equally spaced points in the interval $(0,50]$ by using the recursive formula given in Equation (\ref{Bt}) throughout the steps detailed in \ref{procedimiento_recursivo_variandoT}.
\begin{figure}[h]
\begin{center}
\includegraphics[scale=0.30]{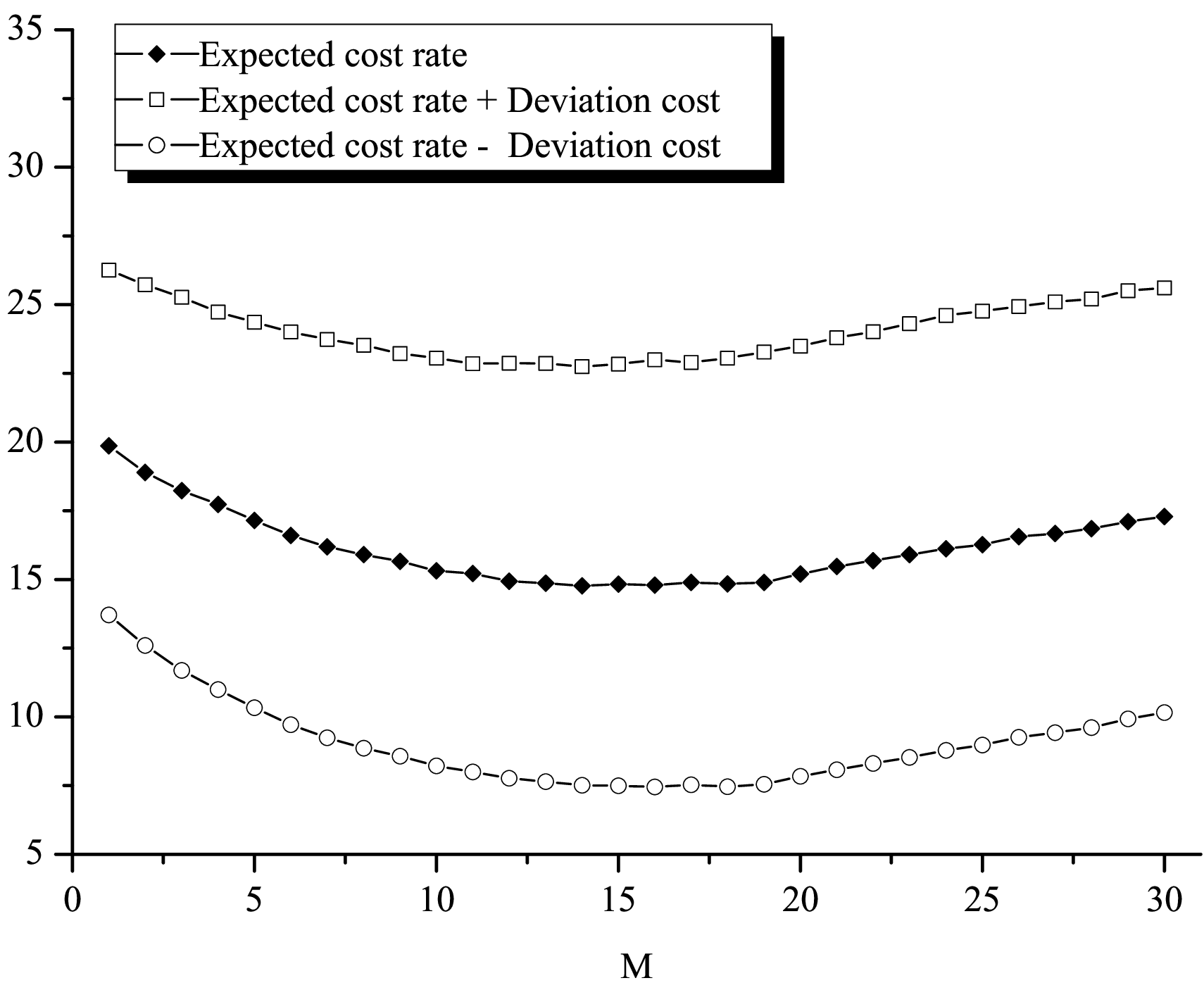}
\caption{Expected transient cost rate at time $t_f=50$ $t.u.$ and its standard deviation associated for different values of $T$.}\label{sd_T}
\end{center}
\end{figure}

Focusing now on the main model parameters influence on the
solution, we analyse the gamma process
parameters sensitivity.

Let $\tilde{E}\left[C_{T,\alpha_{(v_i\%)},\beta_{(v_j\%)}}^{14}(50)\right]$ be the minimal expected transient cost at time $t_f=50$ $t.u.$ obtained when the gamma process parameters ($\alpha$ and $\beta$) are varied according to the specifications given in (\ref{especificaciones_alpha_beta}) for each value of $T$ with preventive threshold $M=14$ $d.u.$ Based on (\ref{variacion_alpha_beta}), $V_{T,\alpha_{(v_i\%)},\beta_{(v_j\%)}}^{14}(50)$ denotes the relative variation between the minimal expected transient cost with the original parameter values and the minimal expected transient cost calculated by using the parameter values modified  according to (\ref{especificaciones_alpha_beta}) for variable $T$ and $M=14$ $d.u.$

Table \ref{variation_alpha_beta_Tfixed} shows the values obtained for $V_{T,\alpha_{(v_i\%)},\beta_{(v_j\%)}}^{14}(50)$ expressed in percentage. By modifying $\pm 1\%$ around $\alpha=\beta=0.1$, the relative variation percentages are small. The results obtained also show that $V_{T,\alpha_{(v_i\%)},\beta_{(v_j\%)}}^{14}(50)$ is lower in the diagonal of the table. That means when the parameters $\alpha$ and $\beta$ are modified in the same direction and magnitude.
 
\begin{table}
\begin{minipage}[b]{\linewidth}\centering
\renewcommand{\arraystretch}{2}
\scalebox{0.68}{
\begin{tabular}{l|ccccccc|}
\cline{2-8} & $\beta_{_{(-10\%)}}$ & $\beta_{_{(-5\%)}}$ &
$\beta_{_{(-1\%)}}$ & $\beta$ & $\beta_{_{(1\%)}}$ &
$\beta_{_{(5\%)}}$ &
$\beta_{_{(10\%)}}$  \\
\hline

\multicolumn{1}{|l|}{$\alpha_{(-10\%)}$} & \cellcolor[gray]{0.948279} 1.5684 & \cellcolor[gray]{0.937800} 1.8862 & \cellcolor[gray]{0.844780} 4.7069 & \cellcolor[gray]{0.811677} 5.7107 & \cellcolor[gray]{0.797355} 6.1450 & \cellcolor[gray]{0.734174} 8.0609 & \cellcolor[gray]{0.629864} 11.2240 \\

\multicolumn{1}{|l|}{$\alpha_{(-5\%)}$} & \cellcolor[gray]{0.848585} 4.5915 & \cellcolor[gray]{0.969000} 0.9400 & \cellcolor[gray]{0.943165} 1.7235 & \cellcolor[gray]{0.916133} 2.5432 & \cellcolor[gray]{0.888336} 3.3861 & \cellcolor[gray]{0.830628} 5.1360 & \cellcolor[gray]{0.731897} 8.1300 \\

\multicolumn{1}{|l|}{$\alpha_{(-1\%)}$} & \cellcolor[gray]{0.719070} 8.5189 & \cellcolor[gray]{0.885840} 3.4618 & \cellcolor[gray]{0.970310} 0.9003 & \cellcolor[gray]{0.998515} 0.0450 & \cellcolor[gray]{0.976937} 0.6994 & \cellcolor[gray]{0.901065} 3.0001 & \cellcolor[gray]{0.794693} 6.2257 \\

\multicolumn{1}{|l|}{$\alpha_{1}$} & \cellcolor[gray]{0.851810} 4.4937 & \cellcolor[gray]{0.726892} 8.2817 & \cellcolor[gray]{0.954677} 1.3744 & \cellcolor[gray]{1.000000} 0.0000 & \cellcolor[gray]{0.994231} 0.1749 & \cellcolor[gray]{0.908346} 2.7793 & \cellcolor[gray]{0.828127} 5.2119 \\

\multicolumn{1}{|l|}{$\alpha_{(1\%)}$} & \cellcolor[gray]{0.704632} 8.9568 & \cellcolor[gray]{0.843989} 4.7309 & \cellcolor[gray]{0.934388} 1.9896 & \cellcolor[gray]{0.939076} 1.8475 & \cellcolor[gray]{0.983724} 0.4936 & \cellcolor[gray]{0.950769} 1.4929 & \cellcolor[gray]{0.833054} 5.0625 \\

\multicolumn{1}{|l|}{$\alpha_{(5\%)}$} & \cellcolor[gray]{0.620618} 11.5044 & \cellcolor[gray]{0.761288} 7.2387 & \cellcolor[gray]{0.834954} 5.0048 & \cellcolor[gray]{0.884098} 3.5146 & \cellcolor[gray]{0.889720} 3.3441 & \cellcolor[gray]{0.973319} 0.8091 & \cellcolor[gray]{0.910635} 2.7099 \\

\multicolumn{1}{|l|}{$\alpha_{(10\%)}$} & \cellcolor[gray]{0.500000} 15.1620 & \cellcolor[gray]{0.646986} 10.7048 & \cellcolor[gray]{0.750153} 7.5764 & \cellcolor[gray]{0.783903} 6.5529 & \cellcolor[gray]{0.776279} 6.7841 & \cellcolor[gray]{0.878556} 3.6827 & \cellcolor[gray]{0.979853} 0.6109 \\
\hline
\end{tabular}}
\caption{Relative variation percentages for the expected transient cost for the gamma process parameters for a fixed $M=14$ $d.u.$}\label{variation_alpha_beta_Tfixed} \vspace{1cm}
\end{minipage}
\end{table}

On the other hand, let $\tilde{E}^*\left[C^{14}_{T,\lambda_{1,(v_i\%)},\lambda_{2,(v_j\%)}}(t_f)\right]$ be the minimal expected transient cost
obtained by varying the parameters $\lambda_1$ and $\lambda_2$
simultaneously as in the scheme given in (\ref{especificaciones_lambda1_lambda2}).  Based on (\ref{variacion_choques}), $V^{14}_{T,\lambda_{1,(v_i\%)},\lambda_{2,(v_j\%)}}(50)$ denotes the relative variation between the minimal expected transient cost  with the original parameter values and the minimal expected transient cost calculated by using the parameter values modified  according to (\ref{especificaciones_lambda1_lambda2}) for variable $T$ and $M=14$ $d.u.$

The relative variation percentages are presented in Table \ref{variation_lambda1_lambda2_Tfixed}. The results show that when $\lambda_1=0.01$ is modified between $-5\%$ and $1\%$, the relative
variation percentages are small. In addition, the
parameter $\lambda_1$ has greater effects on $V^{14}_{T,\lambda_{1,(v_i\%)},\lambda_{2,(v_j\%)}}(50)$
than the parameter $\lambda_2$.
\begin{table}
\begin{minipage}[b]{\linewidth}\centering
\renewcommand{\arraystretch}{2}
\scalebox{0.68}{
\begin{tabular}{l|ccccccc|}
\cline{2-8} & $\lambda_{2,(-10\%)}$ & $\lambda_{2,(-5\%)}$ & $\lambda_{2,(-1\%)}$ &
$\lambda_{2}$ & $\lambda_{2,(1\%)}$ & $\lambda_{2,(5\%)}$ &
$\lambda_{2,(10\%)}$  \\
\hline

\multicolumn{1}{|l|}{$\lambda_{1,(-10\%)}$} & \cellcolor[gray]{0.680240} 2.1506 & \cellcolor[gray]{0.704223} 1.9893 & \cellcolor[gray]{0.713873} 1.9244 & \cellcolor[gray]{0.813186} 1.2564 & \cellcolor[gray]{0.741655} 1.7375 & \cellcolor[gray]{0.780880} 1.4737 & \cellcolor[gray]{0.746594} 1.7043 \\

\multicolumn{1}{|l|}{$\lambda_{1,(-5\%)}$} & \cellcolor[gray]{0.873244} 0.8525 & \cellcolor[gray]{0.941300} 0.3948 & \cellcolor[gray]{0.893907} 0.7135 & \cellcolor[gray]{0.957462} 0.2861 & \cellcolor[gray]{0.996159} 0.0258 & \cellcolor[gray]{0.967311} 0.2199 & \cellcolor[gray]{0.959363} 0.2733 \\

\multicolumn{1}{|l|}{$\lambda_{1,(-1\%)}$} & \cellcolor[gray]{0.873471} 0.8510 & \cellcolor[gray]{0.925075} 0.5039 & \cellcolor[gray]{0.933952} 0.4442 & \cellcolor[gray]{0.906937} 0.6259 & \cellcolor[gray]{0.928798} 0.4789 & \cellcolor[gray]{0.882602} 0.7896 & \cellcolor[gray]{0.800279} 1.3432 \\

\multicolumn{1}{|l|}{$\lambda_{1}$} & \cellcolor[gray]{0.925861} 0.4986 & \cellcolor[gray]{0.925661} 0.5000 & \cellcolor[gray]{0.930667} 0.4663 & \cellcolor[gray]{1.000000} 0.0000 & \cellcolor[gray]{0.888249} 0.7516 & \cellcolor[gray]{0.844257} 1.0475 & \cellcolor[gray]{0.832893} 1.1239 \\

\multicolumn{1}{|l|}{$\lambda_{1,(1\%)}$} & \cellcolor[gray]{0.997122} 0.0194 & \cellcolor[gray]{0.910251} 0.6036 & \cellcolor[gray]{0.851957} 0.9957 & \cellcolor[gray]{0.861597} 0.9308 & \cellcolor[gray]{0.884622} 0.7760 & \cellcolor[gray]{0.834545} 1.1128 & \cellcolor[gray]{0.705607} 1.9800 \\

\multicolumn{1}{|l|}{$\lambda_{1,(5\%)}$} & \cellcolor[gray]{0.716822} 1.9045 & \cellcolor[gray]{0.764047} 1.5869 & \cellcolor[gray]{0.571360} 2.8829 & \cellcolor[gray]{0.778322} 1.4909 & \cellcolor[gray]{0.682619} 2.1346 & \cellcolor[gray]{0.617928} 2.5697 & \cellcolor[gray]{0.553443} 3.0034 \\

\multicolumn{1}{|l|}{$\lambda_{1,(10\%)}$} &  \cellcolor[gray]{0.639178} 2.4267 & \cellcolor[gray]{0.681883} 2.1395 & \cellcolor[gray]{0.583024} 2.8044 & \cellcolor[gray]{0.555961} 2.9864 & \cellcolor[gray]{0.574677} 2.8605 & \cellcolor[gray]{0.500000} 3.3628 & \cellcolor[gray]{0.552647} 3.0087 \\
\hline
\end{tabular}}
\caption{Relative variation percentages for the expected transient cost rate for parameters $\lambda_1$ and $\lambda_2$ for a fixed $M=14$ $d.u.$}\label{variation_lambda1_lambda2_Tfixed}
\end{minipage}
\end{table}

\subsection{Transient two-dimensional expected cost rate analysis}\label{two-dimensional_transient}
The expected transient cost based on the recursive formula given in (\ref{ECt}) considering $M$ and $T$ variables is analysed. The optimisation problem is computed as follows:
\begin{enumerate}
\item A grid of size $10$ is obtained by discretising the set $[5,50]$ into $10$ equally spaced points from $5$ to $50$ for $T$. Let $T_i$ be the $i$-th value of the grid obtained previously, for $i=1,2,\ldots,10$.

\item A grid of size $30$ is obtained by discretising the set $[1,30]$ into $30$ equally spaced points from $1$ to $30$ for $M$. Let $M_j$ be the $j$-th value of $M$ which corresponds to the $i$-th value of the grid obtained previously, for $j=1,2,\ldots,30$.

\item For each fixed combination $(T_i,M_j)$, we obtain $50000$ simulations of $(R_1,I_1,W_d)$. With these simulations and applying Monte Carlo method, $\tilde{P}_{R_{1,p}}^{M_j}(kT_i)$, $\tilde{P}_{R_{1,c}}^{M_j}(kT_i)$, $\tilde{P}_{R_1}^{M_j}(kT_i)$, and $\tilde{E}\left[W_{T_i}^{M_j}((k-1)T_i,kT_i))\right]$ for $k=1,2,\ldots,\lfloor 50/T_i\rfloor$ are obtained (see Fig. \ref{flujos_variando_T_M}).

\item For fixed $T_i$ and $M_j$, let $\tilde{E}\left[C_{T_i}^{M_j}(50)\right]$ be the expected cost at time $t_f=50$ $t.u.$ Expected cost is calculated by using the recursive formula given in (\ref{ECt}), replacing the corresponding probabilities by their estimations calculated in Step 2, with initial condition $\tilde{E}\left[C_{T_i}^{M_j}(0)\right]=0$. 

\item The optimisation problem is reduced to find the values $T_{opt}$ and $M_{opt}$ which minimise the expected cost $\tilde{E}\left[C_{T_i}^{M_j}(50)\right]$. That is 
\begin{equation*}
\tilde{E}\left[C_{T_{opt}}^{M_{opt}}(50)\right]=\min_{\substack{T\geq 0 \\ 0\leq M\leq L}}\left\{\tilde{E}\left[C_{T}^{M}(50)\right]\right\}.
\end{equation*}

\end{enumerate}

Let $\tilde{E}\left[C_{T_i}^{M_j}(50)\right]/50$ be the expected cost rate at time $t_f=50$ $t.u.$ The expected cost rate versus $T$ and $M$ is shown in Fig. \ref{contour_mesh_transient_cost_rate}. 
\begin{figure*}
\centering
\begin{minipage}[H]{.45\textwidth}
\begin{center}
\includegraphics[scale=0.6]{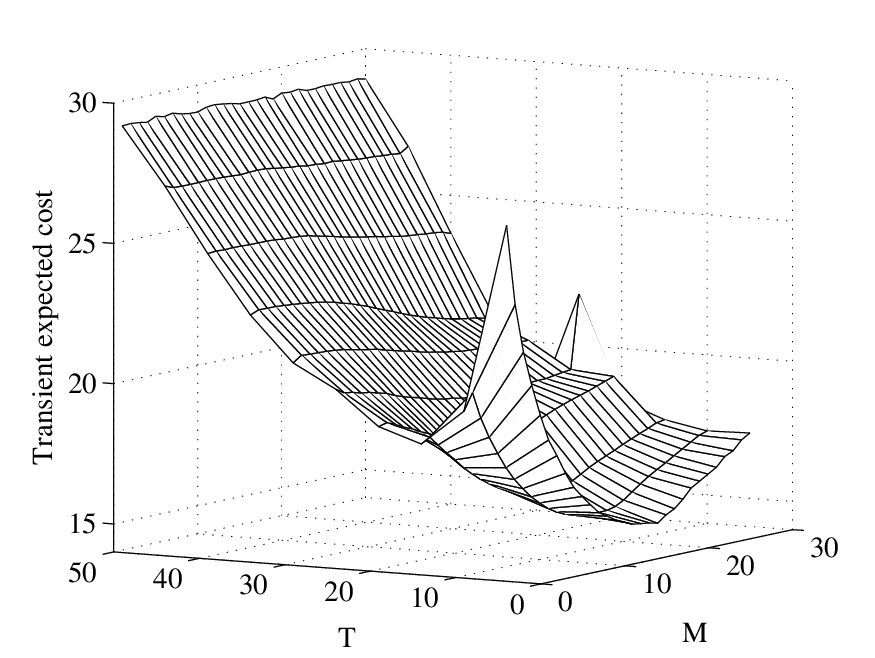}
\end{center}
\end{minipage}
\hfill
\begin{minipage}[H]{.45\textwidth}
\begin{center}
\includegraphics[scale=0.6]{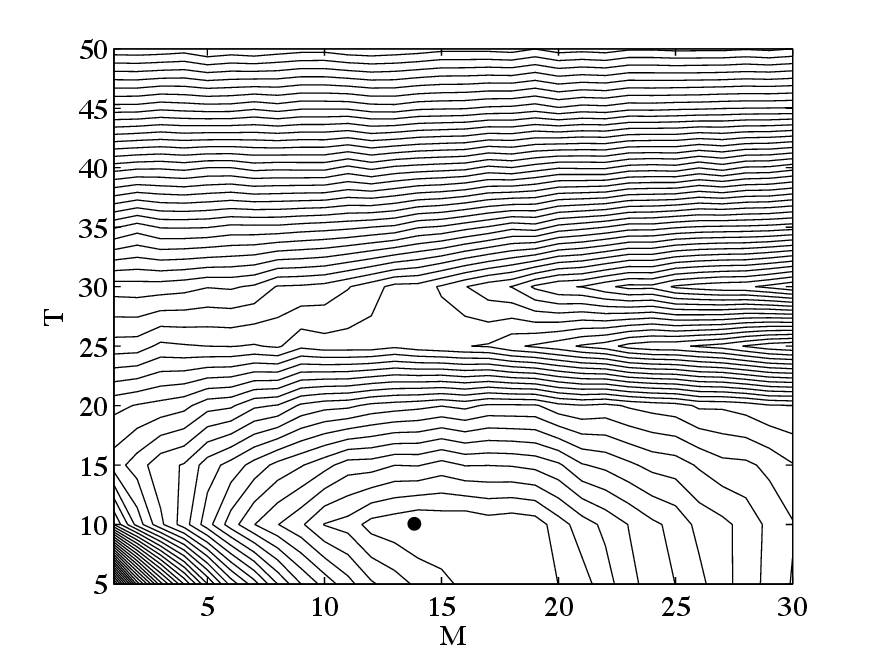}
\end{center}
\end{minipage}
\caption{\label{contour_mesh_transient_cost_rate}Mesh and contour plots for the expected transient cost rate.}
\end{figure*}

The values of $T$ and $M$ which minimise the expected cost rate at time $t_f=50$ $t.u.$ are reached for $M=14$  $d.u.$ and $T=10$ $t.u.$ with an expected cost rate of $14.7637$ $m.u./t.u.$

\subsection{Performance measures for optimal values of $T$ and $M$}
We now analyse the availability and reliability of the system considering the optimal maintenance strategy.

The availability of the system based on the recursive formula given in (\ref{At})  is computed throughout the following steps:

\begin{enumerate} \label{procedimiento_recursivo_variandoT}
\item A grid of size $50$ is obtained by discretising the set $[1,50]$ into $50$ equally spaced points from $1$ to $50$ for the instant time $t$. Let $t_n$ be the $n$-th value of the grid obtained previously, for $n=1,2,\ldots,50$.

\item Let $\tilde{A}_{10}^{14}(t_n)$ be the system availability estimation for a time between inspections $T_{opt}=10$ $t.u.$ and a preventive threshold $M_{opt}=14$ $d.u.$ The availability of the system is calculated by using the recursive formula given in (\ref{At}), replacing $P_{R_1}^{14}(10k)$ by its estimation $\tilde{P}_{R_1}^{14}(10k)$ calculated in Step 2 of the procedure detailed in Section \ref{two-dimensional_transient}, with initial condition $\tilde{A}_{10}^{14}(0)=1$. 
\end{enumerate}

Figure \ref{availabilityFig} shows the availability of the system versus $t$. We can conclude that, for fixed $T=10$ and $M=14$, the probability that the system is working at any instant time of its life cycle is, at least, of the $82\%$.
\begin{figure}[h]
\begin{center}
\includegraphics[scale=0.3]{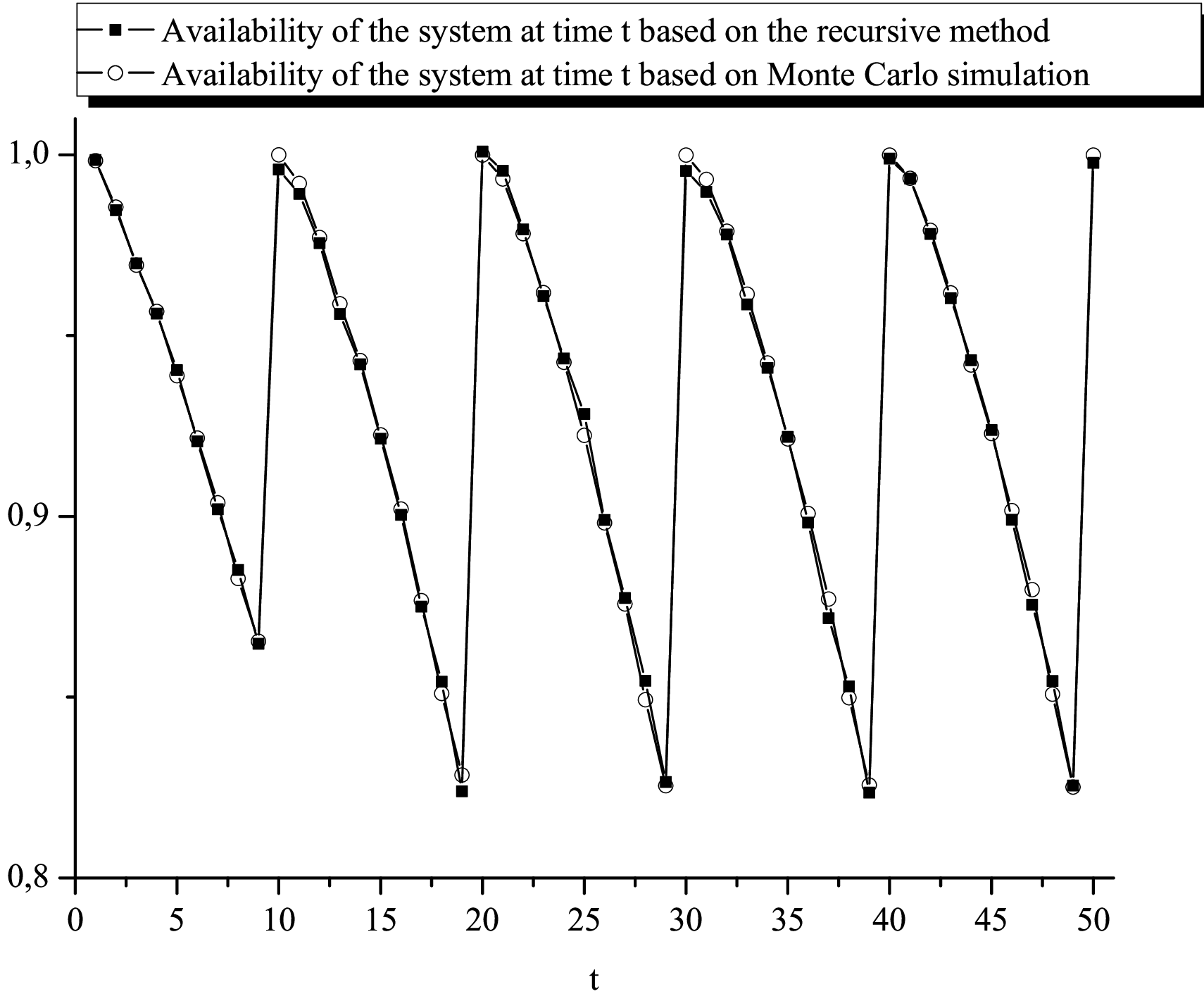}
\caption{Availability of the system for different values of $t$.}\label{availabilityFig}
\end{center}
\end{figure}

Next, the reliability of the system is evaluated. The reliability of the system based on the recursive formula given in (\ref{Rt})  is computed throughout the following steps:

\begin{enumerate} \label{procedimiento_recursivo_variandoT}
\item A grid of size $50$ is obtained by discretising the set $(1,50]$ into $50$ equally spaced points from $1$ to $50$ for $t$. 

\item For fixed $T=10$ and $M=14$, let $\tilde{R}_{10}^{14}(t)$ be the system reliability estimation at time $t$. The reliability of the system is calculated by using the recursive formula given in (\ref{Rt}), replacing $P_{R_{1,p}}^{14}(10k)$ by its estimation $\tilde{P}_{R_{1,p}}^{14}(10k)$ calculated in Step 2 of the procedure detailed in Section \ref{two-dimensional_transient} with initial condition $\tilde{R}_{10}^{14}(0)=1$. 
\end{enumerate}

Figure \ref{reliabilityFig1} shows the reliability of the system versus $t$. We can conclude that, for fixed $T=10$ and $M=14$, the probability that the system does not fail in its life cycle is of the $32\%$.
\begin{figure}[h]
\begin{center}
\includegraphics[scale=0.30]{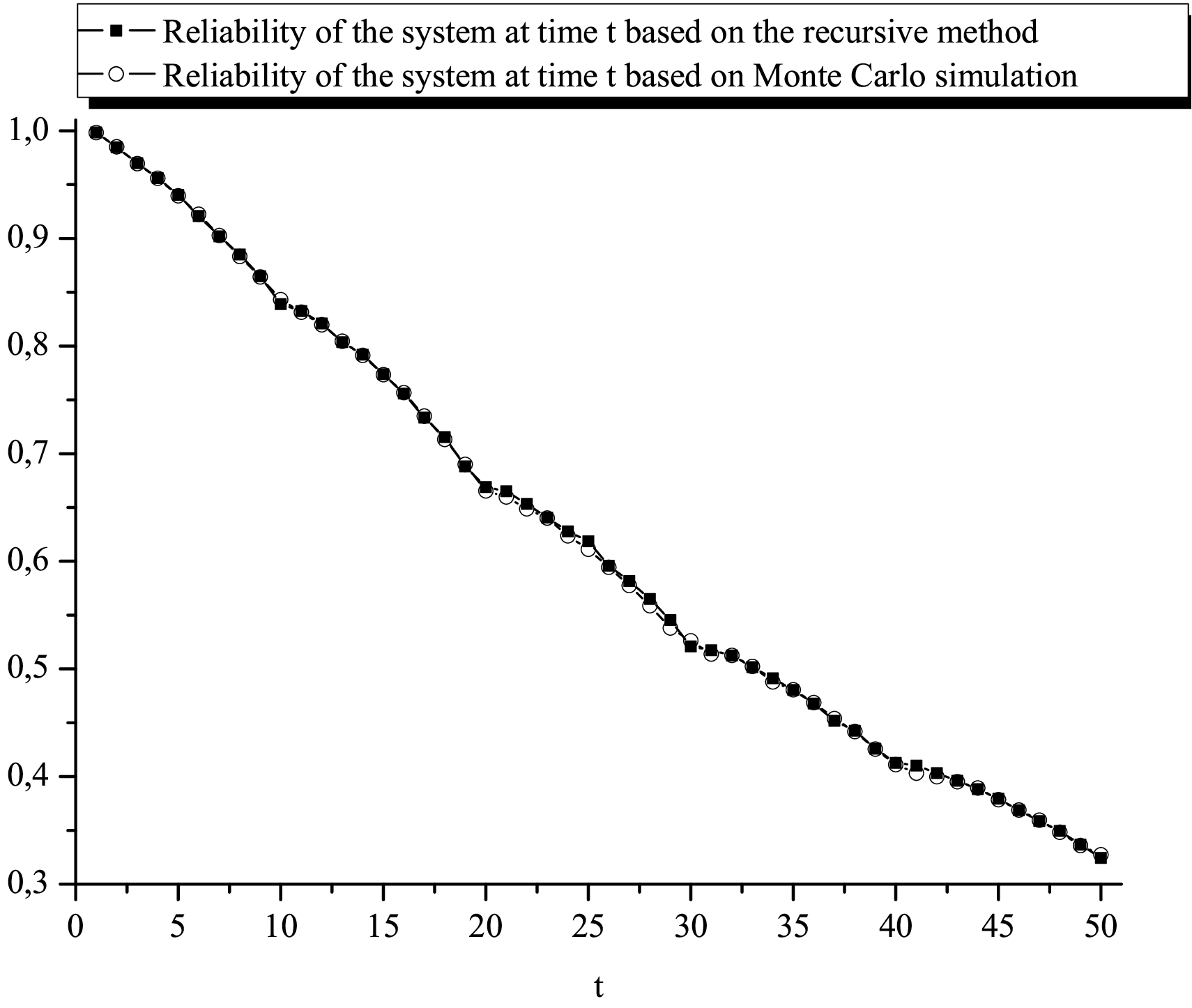}
\caption{Reliability of the system for different values of $t$.}\label{reliabilityFig1}
\end{center}
\end{figure}

Finally, the interval reliability of the system based on (\ref{IRt}) is computed throughout the following steps:

\begin{enumerate} \label{procedimiento_recursivo_variandoT}
\item A grid of size $10$ is obtained by discretising the set $[10,30]$ into $10$ equally spaced points from $10$ to $30$ for $t$.

\item For $s=5$, let $\tilde{IR}_{10}^{14}(t,t+s)$ be the system interval reliability estimation for a time between inspections $T=10$ $t.u.$ and a preventive threshold $M=14$ $d.u.$. The interval reliability of the system is calculated by using the recursive formula given in (\ref{IRt}), replacing $P_{R_{1,p}}^{14}(10k)$ and $P_{R_{1}}^{14}(10k)$ by the estimation $\tilde{P}_{R_{1,p}}^{14}(10k)$ and $\tilde{P}_{R_{1}}^{14}(10k)$, respectively, calculated in Step 2 of the procedure detailed in Section \ref{two-dimensional_transient} with initial conditions $\tilde{IR}_{10}^{14}(0,0)=1$ and $\tilde{R}_{10}^{14}(0)=1$. 
\end{enumerate}

Figure \ref{reliabilityFig} shows the interval reliability of the system versus $t$. As we can observe, the results provide for both methods are very similar. Furthermore we can conclude that, for fixed $T=10$ and $M=14$, the probability that the system does not fail in the interval $[t,t+5]$ is, at least, of the $72\%$ for $15 \leq t \leq 35$.
\begin{figure}[h]
\begin{center}
\includegraphics[scale=0.30]{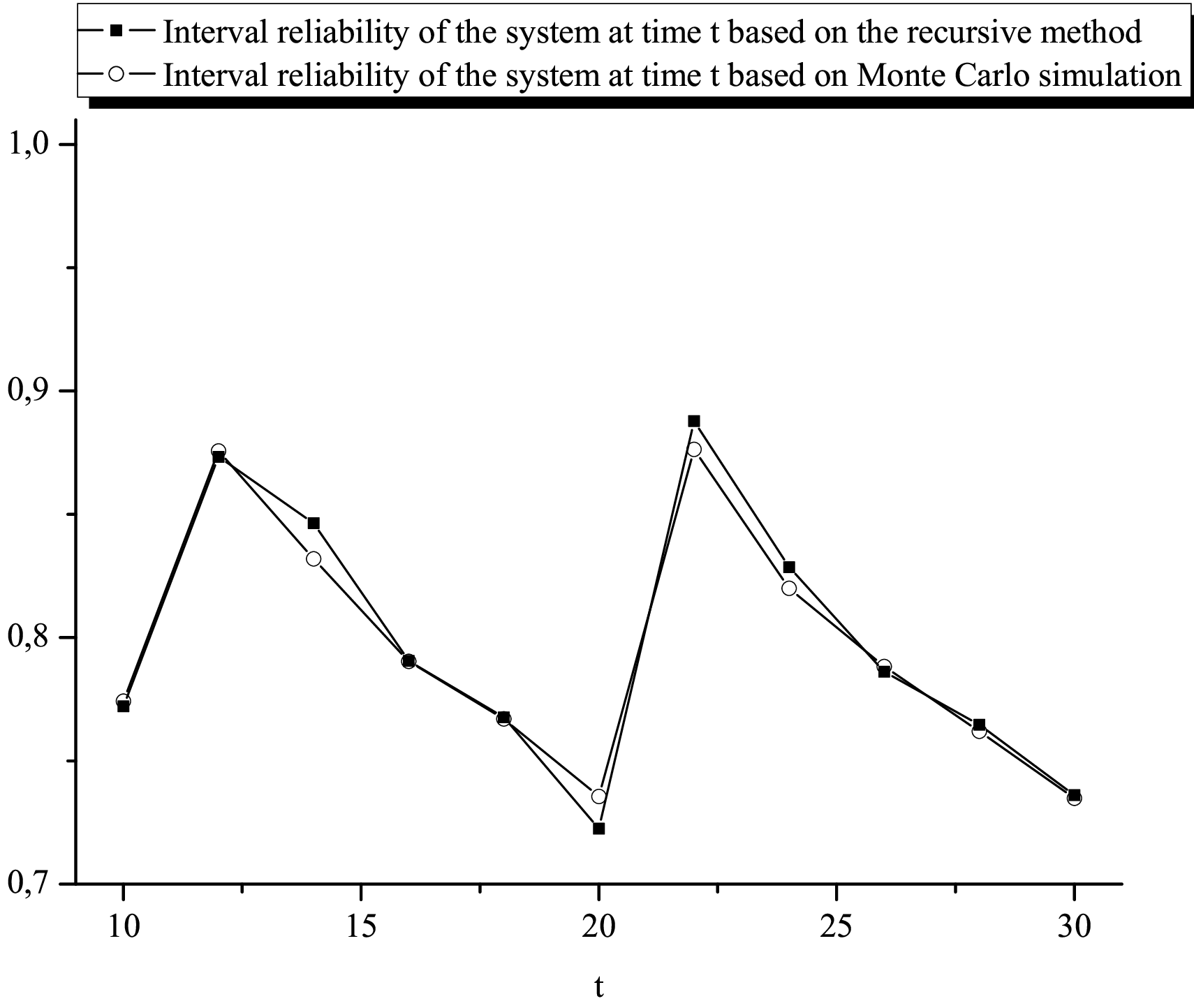}
\caption{Interval reliability of the system for different values of $t$.}\label{reliabilityFig}
\end{center}
\end{figure}

\section{Conclusions and further works}
In this paper, a CBM strategy is analysed by considering a finite life cycle of the system. The system is subject to two different causes of failure, a degradation process modelled under a gamma process and a sudden shock process which follows a DSPP. We consider that both causes of failure are dependent. This dependence is reflected in that the system is more susceptible to external shocks when the deterioration level of the system reaches a certain threshold $M_s$. 

Under these assumptions, the expected cost rate in the life cycle is used as objective function to obtain the optimal maintenance strategy. To this end, a numerical method based on a recursive formula is provided to evaluate the expected cost rate and the standard deviation associated. 

The expected transient cost rate calculated using the recursive formula is compared to both the asymptotic expected cost rate and the expected transient cost rate calculated using strictly Monte Carlo simulation. In addition, the robustness of the gamma process parameters and DSPP is analysed. Furthermore, for the comparison between transient and asymptotic cost rate the results are also provided under a bivariate case, where the time between inspections and the preventive threshold vary simultaneously. In the comparison between the method based on strictly Monte Carlo simulation and the method based on the recursive formula, we observed similar results for a fixed time between inspections $T$. For the preventive threshold $M$, the results presented some differences, which could be explained due to the high variance in the deterioration increments of the degradation process. If the life cycle increases, the recursive method shall tends to be more costly in terms of computation than the method based on strictly Monte Carlo simulation.

Finally, three important performance measures in the maintenance field,  the availability, the reliability and the interval reliability of the system, are analysed under the bivariate optimal maintenance strategy. Recursive formulas are given to obtain these performance measures. Furthermore, the results obtained using the recursive formulation is compared to the results obtained using strictly Monte Carlo simulation.

In this paper, we consider a system subject to a unique degradation process. However, sometimes the system is subject to multiple degradation processes. A possible further extension of this work is to consider a system subject to multiple degradation processes. With respect to the two types of failure, the analysis of this model is based on the dependence of the degradation level of the system on the intensity of shocks. An interesting extension could be to assume a bidirectional relation of dependence where the process affects also to the degradation process. 

\section*{Appendix A}
For $t<T$, $E\left[C_T^M(t)\right]$, is given by
\begin{equation*}\label{Proof EC0t1}
\begin{array}{l}
\begin{aligned}
E\left[C_T^M(t)\right]=&C_d E\left[(t-Y) \mathbf{1}_{\{\sigma_{M_s}<Y<t,~Y<\sigma_L\}}\right]\\
+&C_d E\left[(t-\sigma_L) \mathbf{1}_{\{\sigma_{M_s}<\sigma_L<t,~\sigma_L<Y\}}\right]\\
+&C_d E\left[(t-Y) \mathbf{1}_{\{Y<t,~Y<\sigma_{M_s}\}}\right].\\
\end{aligned}
\end{array}
\end{equation*}

That is
\begin{equation*}\label{Proof EC0t2}
\begin{array}{l}
\begin{aligned}
E\left[C_T^M(t)\right]=&C_d\displaystyle \int_{0}^{t}f_{\sigma_{M_s}}(u)\int_{u}^{t}\left[-\frac{\partial}{\partial v}I(u,v)\right]\\
&\bar F_{\sigma_{L}-\sigma_{M_s}}(v-u)(t-v)~dv~du\\
+&C_d\displaystyle \int_{0}^{t}f_{\sigma_{M_s}}(u)\int_{u}^{t}I(u,v)\\
&f_{\sigma_{L}-\sigma_{M_s}}(v-u)(t-v)~dv~du\\
+&C_d\displaystyle \int_{0}^{t}f_1(u)\bar F_{\sigma_{M_s}}(u)(t-u)du.\\
\end{aligned}
\end{array}
\end{equation*}

For $t\geq T$, $E\left[C_{T}^M(t)\right]$ is conditioned to $R_1$
$$
\begin{array}{l}
\begin{aligned}
E\left[C_{T}^M(t)\right]=&E\left[C_{T}^M(t),R_1\leq t\right]+E\left[C_{T}^M(t),R_1> t\right].
\end{aligned}
\end{array}
$$
Thus, if $R_1>t$
$$
\begin{array}{l}
\begin{aligned}
E&\left[C_{T}^M(t),R_1> t\right]\\
=&\lfloor t/T\rfloor C_I\Bigg(1-\displaystyle\sum_{k=1}^{\lfloor t/T\rfloor}P_{R_1}^M(kT)\Bigg)\\
+&C_dE\left[W_{T}^{M}(\lfloor t/T\rfloor T,t)\right]\Bigg(1-\displaystyle\sum_{k=1}^{\lfloor t/T\rfloor}P_{R_1}^M(kT)\Bigg).
\end{aligned}
\end{array}$$
If $R_1\leq t$, $E\left[C_{T}^M(t)\right]$ can be split into two terms: the cost in the first renewal cycle $\left(C_T^M(R_1)\right)$ and the cost in the remaining time horizon $\left(C_T^M(R_1,t)\right)$. Since $C_T^M(R_1)$ and $C_T^M(R_1,t)$ are independent, we get
\begin{equation*}
\begin{array}{l}
\begin{aligned}
E\left[C_{T}^M(t),R_1\leq t\right]=&E\left[C_T^M(R_1),R_1\leq t\right]\\
+&E\left[C_{T}^M(R_1,t),R_1\leq t\right].
\end{aligned}
\end{array}
\end{equation*}

For a fixed $T$
\begin{equation*}
\begin{array}{l}
\begin{aligned}
E&\left[C_T^M(R_1),R_1\leq t\right]\\
=&\displaystyle\sum_{k=1}^{\lfloor t/T\rfloor}E\left[C_T^M(R_1),R_{1,c}=kT\right]\\
+&\displaystyle\sum_{k=1}^{\lfloor t/T\rfloor}E\left[C_T^M(R_1),R_{1,p}=kT\right],
\end{aligned}
\end{array}
\end{equation*}
being
\begin{equation*}
\begin{array}{l}
\begin{aligned}
E&\left[C_T^M(R_1),R_{1,c}=kT\right]\\
=&\left(C_c+C_I(k-1)\right)P_{R_{1,c}}^M(kT)\\
+&C_d E\left[W_{T}^{M}((k-1)T,kT)\right]P_{R_{1,c}}^M(kT),
\end{aligned}
\end{array}
\end{equation*}
and
\begin{equation*}
E\left[C_T^M(R_1),R_{1,p}=kT\right]=\left(C_p+C_I(k-1)\right)P_{R_{1,p}}^M(kT).
\end{equation*}

Since $C_T^M(R_1,t)$ is stochastically the same as $C_T^M(t-R_1)$, 
$$E\left[C_T^M(R_1,t),R_1=kT\right]=E\left[C_T^M(t-kT)\right]P_{R_1}^M(kT).$$

Hence, $E\left[C_{T}^M(t)\right]$ verifies the following recursive equation
$$E\left[C_{T}^M(t)\right]=\displaystyle \sum_{k=1}^{\lfloor t/T\rfloor}E\left[C_{T}^M(t-kT)\right]P_{R_1}^M(kT)+G_T^M(t),$$
being
\begin{equation*}
\begin{array}{l}
\begin{aligned}
G_T^M(t)=&\displaystyle \sum_{k=1}^{\lfloor t/T\rfloor}\Big(C_p+C_I(k-1)\Big)P_{R_{1,p}}^M(kT)\\
+&\displaystyle \sum_{k=1}^{\lfloor t/T\rfloor}\Big(C_c+C_I(k-1)\Big)P_{R_{1,c}}^M(kT)\\
+&\displaystyle \sum_{k=1}^{\lfloor t/T\rfloor}C_dE\left[W_{T}^M((k-1)T,kT)\right]P_{R_{1,c}}^M(kT)\\
+&\lfloor t/T\rfloor C_I\Big(1-\sum_{k=1}^{\lfloor t/T\rfloor} P_{R_1}^M(kT)\Big)\\
+&C_dE\left[W_T^M(\lfloor t/T\rfloor T,t)\right]\Big(1-\sum_{k=1}^{\lfloor t/T\rfloor} P_{R_1}^M(kT)\Big),\\
\end{aligned}
\end{array}
\end{equation*}
and the result holds.

\section*{Appendix B}
For $t<T$, the expected square cost, $E\left[C_T^M(t)^2\right]$, is given by
\begin{equation*}\label{Proof B0t1}
\begin{array}{l}
\begin{aligned}
E\left[C_T^M(t)^2\right]=&C_d^2 E\left[(t-Y)^2 \mathbf{1}_{\{\sigma_{M_s}<Y<t,~Y<\sigma_L\}}\right]\\
+&C_d^2 E\left[(t-\sigma_L)^2 \mathbf{1}_{\{\sigma_{M_s}<\sigma_L<t,~\sigma_L<Y\}}\right]\\
+&C_d^2 E\left[(t-Y)^2 \mathbf{1}_{\{Y<t,~Y<\sigma_{M_s}\}}\right].\\
\end{aligned}
\end{array}
\end{equation*}

That is
\begin{equation*}\label{Proof B0t2}
\begin{array}{l}
\begin{aligned}
E\left[C_T^M(t)\right]=&C_d^2\displaystyle \int_{0}^{t}f_{\sigma_{M_s}}(u)\int_{u}^{t}\left[-\frac{\partial}{\partial v}I(u,v)\right]\\
&\bar F_{\sigma_{L}-\sigma_{M_s}}(v-u)(t-v)^2~dv~du\\
+&C_d^2\displaystyle \int_{0}^{t}f_{\sigma_{M_s}}(u)\int_{u}^{t}I(u,v)\\
&f_{\sigma_{L}-\sigma_{M_s}}(v-u)(t-v)^2~dv~du\\
+&C_d^2\displaystyle \int_{0}^{t}f_1(u)\bar F_{\sigma_{M_s}}(u)(t-u)^2du.\\
\end{aligned}
\end{array}
\end{equation*}

For $t\geq T$, $E\Big[C_T^{M}(t)^2\Big]$ is conditioned to $R_1$
$$
\begin{array}{l}
\begin{aligned}
E\left[C_{T}^M(t)^2\right]=&E\left[C_{T}^M(t)^2,R_1\leq t\right]+E\left[C_{T}^M(t)^2,R_1> t\right].
\end{aligned}
\end{array}
$$
Hence,
$$
\begin{array}{l}
\begin{aligned}
E&\left[C_{T}^M(t)^2,R_1> t\right]\\
=&\Big(\lfloor t/T\rfloor C_I +C_dE\left[W_{T}^{M}(\lfloor t/T\rfloor T,t)\right]\Big)^2\\
&\Bigg(1-\displaystyle\sum_{k=1}^{\lfloor t/T\rfloor}P_{R_1}^M(kT)\Bigg),\end{aligned}
\end{array}$$
where $P_{R_1}^M$ is given by (\ref{prob_reemplazamiento}). On the other hand
$$
\begin{array}{l}
\begin{aligned}
E&\left[C_{T}^M(t)^2,R_1\leq t\right]\\
=&E\left[\left(C_T^M(R_1)+C_T^M(R_1,t)\right)^2,R_1\leq t\right].
\end{aligned}
\end{array}
$$

Since $C_T^M(R_1)$ and $C_T^M(R_1,t)$ are independent
$$
\begin{array}{l}
\begin{aligned}
E&\left[\left(C_T^M(R_1)+C_T^M(R_1,t)\right)^2,R_1\leq t\right]\\
=&E\left[C_T^M(R_1)^2,R_1\leq t\right]\\
+&E\left[C_{T}^M(R_1,t)^2,R_1\leq t\right]\\
+&2~E\left[C_T^M(R_1),R_1\leq t\right]E\left[C_{T}^M(R_1,t),R_1\leq t\right].
\end{aligned}
\end{array}
$$

Thus,
\begin{equation*}
\begin{array}{l}
\begin{aligned}
E&\Big[C_T^M(R_1)^2,R_1\leq t\Big]\\
=&\displaystyle\sum_{k=1}^{\lfloor t/T\rfloor}E\left[C_T^M(R_1)^2,R_{1,c}=kT\right]\\
+&\displaystyle\sum_{k=1}^{\lfloor t/T\rfloor}E\left[C_T^M(R_1)^2,R_{1,p}=kT\right],
\end{aligned}
\end{array}
\end{equation*}
being
\begin{equation*}
\begin{array}{l}
\begin{aligned}
E&\left[\left(C_T^M(R_1)\right)^2,R_{1,c}=kT\right]\\
=&\left(C_c+C_I(k-1)+C_dE\left[W_{T}^{M}((k-1)T,kT)\right]\right)^2P_{R_{1,c}}^M(kT),
\end{aligned}
\end{array}
\end{equation*}
and
\begin{equation*}
\begin{array}{l}
\begin{aligned}
E\left[\left(C_T^M(R_1)\right)^2,R_{1,p}=kT\right]=\left(C_c+C_I(k-1)\right)^2P_{R_{1,p}}^M(kT).
\end{aligned}
\end{array}
\end{equation*}

Following the same reasoning as in Appendix A, 
\begin{equation*}
\begin{array}{l}
\begin{aligned}
E&\left[C_T^M(R_1),R_1\leq t\right]\\
=&\displaystyle\sum_{k=1}^{\lfloor t/T\rfloor}\left(C_c+C_I(k-1)\right)P_{R_{1,c}}^M(kT)\\
+&\displaystyle\sum_{k=1}^{\lfloor t/T\rfloor}C_d E\left[W_{T}^{M}((k-1)T,kT)\right]P_{R_{1,c}}^M(kT)\\
+&\displaystyle\sum_{k=1}^{\lfloor t/T\rfloor}\left(C_p+C_I(k-1)\right)P_{R_{1,p}}^M(kT).\\
\end{aligned}
\end{array}
\end{equation*}

Since $C_T^M(R_1,t)$ is stochastically the same as $C_T^M(t-R_1)$, we get
$$\begin{array}{l}
\begin{aligned}
E\left[C_T^M(R_1,t),R_1\leq t\right]=&\displaystyle \sum_{k=1}^{\lfloor t/T\rfloor} E\left[C_{T}^M(t-kT)\right]P_{R_1}^M(kT).
\end{aligned}
\end{array}$$ 
Thus,
$$
\begin{array}{l}
\begin{aligned}
E&\left[C_T^M(R_1),R_1\leq t\right]E\left[C_{T}^M(R_1,t),R_1\leq t\right]\\
=&\sum_{k=1}^{\lfloor t/T \rfloor }\left(C_c+C_I(k-1)\right)E\left[C_{T}^M(t-kT)\right]P_{R_{1,c}}^M(kT)\\
+&\sum_{k=1}^{\lfloor t/T \rfloor }C_dE\left[W_{T}^{M}((k-1)T,kT)\right]E\left[C_{T}^M(t-kT)\right]P_{R_{1,c}}^M(kT)\\
+&\sum_{k=1}^{\lfloor t/T \rfloor }\left(C_p+C_I(k-1)\right)E\left[C_{T}^M(t-kT)\right]P_{R_{1,p}}^M(kT).
\end{aligned}
\end{array}
$$
Then, $E\left[C_{T}^M(t)^2\right]$ verifies the following recursive equation
$$E\left[C_{T}^M(t)^2\right]=\displaystyle \sum_{k=1}^{\lfloor t/T\rfloor}E\left[C_{T}^M(t-kT)^2\right]P_{R_1}^M(kT)+H_T^M(t),$$
being
\begin{equation*}
\begin{array}{l}
\begin{aligned}
H_{T}^M(t)=&\displaystyle \sum_{k=1}^{\lfloor t/T \rfloor } \left(C_p+C_I(k-1)\right)^2P_{R_{1,p}}^M(kT)\\
+&\displaystyle \sum_{k=1}^{\lfloor t/T \rfloor } \left(C_c+C_I(k-1)+C_dE\left[W_{T}^M((k-1)T,kT)\right]\right)^2\\
&P_{R_{1,c}}^M(kT)\\
+&2\sum_{k=1}^{\lfloor t/T \rfloor }\left(C_c+C_I(k-1)\right)E\left[C_T^M(t-kT)\right]\\
&P_{R_{1,c}}^M(kT)\\
+&2\sum_{k=1}^{\lfloor t/T \rfloor }C_dE\left[W_{T}^M((k-1)T,kT)\right]\\
&E\left[C_T^M(t-kT)\right]P_{R_{1,c}}^M(kT)\\
+&2\sum_{k=1}^{\lfloor t/T \rfloor }\left(C_p+C_I(k-1)\right)E\left[C_T^M(t-kT)\right]\\
&P_{R_{1,p}}^M(kT)\\
+&\Big(\lfloor t/T\rfloor C_I +C_dE\left[W_T^M(\lfloor t/T\rfloor T,t)\right] \Big)^2\\
&\Bigg(1-\sum_{k=1}^{\lfloor t/T\rfloor}P_{R_1}^M(kT)\Bigg),\\
\end{aligned}
\end{array}
\end{equation*}
and the result holds.

\section*{Appendix C}
For $t<T$, $A_T^M(t)$ is given by
\begin{equation*}\label{Proof A0t1}
\begin{array}{l}
\begin{aligned}
A_T^M(t)=& P\left[t<\sigma_{M_s},~Y>t\right]
+P\left[\sigma_{M_s}<t<\sigma_L,~Y>t\right]\\
=&\bar F_{\sigma_{M_s}}(t)\bar F_{1}(t)+\displaystyle \int_{0}^{t}f_{\sigma_{M_s}}(u)\bar F_{\sigma_{L}-\sigma_{M_s}}(t-u)I(u,t)du.\\
\end{aligned}
\end{array}
\end{equation*}

For $t\geq T$, $A_T^M(t)$ is conditioned to the time to the first replacement
\begin{equation*}
\begin{array}{l}
\begin{aligned}
A_T^M(t)=& \displaystyle\sum_{j=0}^{\infty}\mathbf{1}_{\{R_{j}\leq t<R_{j+1}\}}\\
&\Big[P\left[O(t)<L,~Y>(t-R_j),~R_1\leq t\right]\\
+&P\left[O(t)<L,~Y>(t-R_j),~R_1> t\right]\Big].
\end{aligned}
\end{array}
\end{equation*}

If $R_1>t$
\begin{equation*}
\begin{array}{l}
\begin{aligned}
A_T^M(t)=& \Big[P\left[t<\sigma_{M},~Y>t\right]\\
&+P\left[\lfloor t/T\rfloor T<\sigma_{M}<\sigma_{M_s}< t<\sigma_{L},~Y>t\right]\\
&+P\left[\lfloor t/T\rfloor T<\sigma_{M}< t<\sigma_{M_s},~Y>t\right]\Big]\mathbf{1}_{\left\{M\leq M_s\right\}}\\
+&\Big[P\left[t<\sigma_{M_s},~Y>t\right]\\
&+P\left[\sigma_{M_s}< \lfloor t/T\rfloor T<\sigma_{M}< t<\sigma_{L},~Y>t\right]\\
&+P\left[\sigma_{M_s}<\lfloor t/T\rfloor T< t<\sigma_{M},~Y>t\right]\\
&+P\left[\lfloor t/T\rfloor T<\sigma_{M_s}< t<\sigma_{L},~Y>t\right]\Big]\mathbf{1}_{\left\{M> M_s\right\}}.
\end{aligned}
\end{array}
\end{equation*}

That is
\begin{equation*}
\begin{array}{l}
\begin{aligned}
A_T^M(t)=& \Big[\bar F_{\sigma_{M}}(t)\bar F_{1}(t)\\
&+\displaystyle \int_{\lfloor t/T \rfloor T}^t f_{\sigma_{M}}(u)\int_{u}^{t}f_{\sigma_{M_s}-\sigma_{M}}(v-u)\\
&\bar F_{\sigma_{L}-\sigma_{M_s}}(t-v)I(v,t)~dv~du\\
&+\displaystyle \int_{\lfloor t/T \rfloor T}^t f_{\sigma_{M}}(u)\bar F_{\sigma_{M_s}-\sigma_{M}}(t-u)\bar F_{1}(t)~du\Big]\mathbf{1}_{\left\{M\leq M_s\right\}}\\
+&\Big[\bar F_{\sigma_{M_s}}(t)\bar F_{1}(t)\\
&+\displaystyle \int_{0}^{\lfloor t/T \rfloor T} f_{\sigma_{M_s}}(u)\int_{\lfloor t/T \rfloor T}^{t}f_{\sigma_{M}-\sigma_{M_s}}(v-u)\\
&\bar F_{\sigma_{L}-\sigma_{M}}(t-v)I(u,t)~dv~du\\
&+\displaystyle \int_{0}^{\lfloor t/T \rfloor T} f_{\sigma_{M_s}}(u)\bar F_{\sigma_{M}-\sigma_{M_s}}(t-u)I(u,t)~du\\
&+\displaystyle \int_{\lfloor t/T \rfloor T}^t f_{\sigma_{M_s}}(u)\bar F_{\sigma_{L}-\sigma_{M_s}}(t-u)I(u,t)~du\Big]\mathbf{1}_{\left\{M> M_s\right\}}\\
&=J_{T,1}^M(t)\mathbf{1}_{\left\{M\leq M_s\right\}}+J_{T,2}^M(t)\mathbf{1}_{\left\{M> M_s\right\}}.
\end{aligned}
\end{array}
\end{equation*}

If $R_1\leq t$, 
\begin{equation*}
\begin{array}{l}
\begin{aligned}
\displaystyle\sum_{j=0}^{\infty}&\mathbf{1}_{\{R_{j}\leq t<R_{j+1}\}}P\left[O(t)<L,~Y>(t-R_j),~R_1\leq t\right]\\
=\displaystyle \sum_{k=1}^{\lfloor t/T\rfloor}&P_{R_1}^M(kT)\Bigg[ \sum_{j=0}^{\infty}\mathbf{1}_{\{R_{j}\leq t<R_{j+1}\}}\\
&P\left[O(t-kT)<L,~Y>(t-kT-R_j)\right]\Bigg]\\
=\displaystyle \sum_{k=1}^{\lfloor t/T\rfloor}&A_T^M(t-kT)P_{R_1}^M(kT).
\end{aligned}
\end{array}
\end{equation*}

Then, for $t\geq T$, $A_T^M(t)$ verifies the following recursive equation
\begin{equation*}
\begin{array}{l}
\begin{aligned}
A_T^M(t)=&\displaystyle \sum_{k=1}^{\lfloor t/T\rfloor}A_T^M(t-kT)P_{R_1}^M(kT)\\
+&J_{T,1}^M(t)\mathbf{1}_{\left\{M\leq M_s\right\}}+J_{T,2}^M(t)\mathbf{1}_{\left\{M> M_s\right\}},
\end{aligned}
\end{array}
\end{equation*}
and the result holds.

\section*{Appendix D}
For $t<T$, thee is no maintenance action on $[0,t]$, hence $R_T^M(t)$ is equal to $A_T^M(t)$. 

For $t\geq T$, $R_T^M(t)$ is conditioned to the time of the first replacement
\begin{equation*}
\begin{array}{l}
\begin{aligned}
R_T^M(t)=&P\left[O(u)<L,~\forall u\in (0,t],~N_s(0,t)=0,~R_{1}\leq t\right]\\
+&P\left[O(u)<L,~\forall u\in (0,t],~N_s(0,t)=0,~R_{1}> t\right]. 
\end{aligned}
\end{array}
\end{equation*}

If $R_{1}>t$
\begin{eqnarray*}
R_T^M(t)& =& A_T^M(t) \\ &=& J_{T,1}^M(t)\mathbf{1}_{\left\{M\leq M_s\right\}}+J_{T,2}^M(t)\mathbf{1}_{\left\{M> M_s\right\}}.
\end{eqnarray*}

If $R_{1}\leq t$, 
\begin{equation*}
\begin{array}{l}
\begin{aligned}
R_T^M(t)=&P\left[O(u)<L,~\forall u\in (0,t],N_s(0,t)=0,~R_{1}\leq t\right]\\
=&\displaystyle \sum_{k=1}^{\lfloor t/T\rfloor}P_{R_{1,p}}^M(kT)P\left[O(u-kT)<L,\right.\\
&\left.\forall u\in (0,t-kT],N_s(0,t-kT)=0\right]\\
=&\displaystyle \sum_{k=1}^{\lfloor t/T\rfloor}R_T^M(t-kT)P_{R_{1,p}}^M(kT).
\end{aligned}
\end{array}
\end{equation*}

Then, for $t\geq T$, $R_T^M(t)$ verifies the following recursive equation
\begin{equation*}
\begin{array}{l}
\begin{aligned}
R_T^M(t)=&\displaystyle \sum_{k=1}^{\lfloor t/T\rfloor}R_T^M(t-kT)P_{R_{1,p}}^M(kT)\\
+&J_{T,1}^M(t)\mathbf{1}_{\left\{M\leq M_s\right\}}+J_{T,2}^M(t)\mathbf{1}_{\left\{M> M_s\right\}},
\end{aligned}
\end{array}
\end{equation*}
and the result holds.

\section*{Appendix E}
For $(t+s)<T$, there is no maintenance action on $[0, t+s]$, hence $IR_T^M(t,t+s)$ is equal to $R_T^M(t+s)$. 
For $t+s\geq T$, $IR_T^M(t,t+s)$ is conditioned to the time of the first replacement
\begin{equation*}
\begin{array}{l}
\begin{aligned}
IR_T^M(t,t+s)=&P\left[O(u)<L,\forall u\in (t,t+s],~\right.\\
&\left.N_s(t,t+s)=0,~R_{1}\leq t\right]\\
+&P\left[O(u)<L,~\forall u\in (t,t+s],\right.\\
&\left.N_s(t,t+s)=0,~t<R_1< t+s\right]\\
+&P\left[O(u)<L,~\forall u\in (t,t+s],\right.\\
&\left.N_s(t,t+s)=0,~R_1\geq t+s\right]\Big].
\end{aligned}
\end{array}
\end{equation*}
If $R_1\geq t+s$
\begin{equation*}
\begin{array}{l}
\begin{aligned}
IR_T^M(t,t+s)=&A_T^M(t+s)\\
=&J_{T,1}^M(t+s)\mathbf{1}_{\left\{M\leq M_s\right\}}+J_{T,2}^M(t+s)\mathbf{1}_{\left\{M> M_s\right\}}.
\end{aligned}
\end{array}
\end{equation*}

If $t<R_1<t+s$
\begin{equation*}
\begin{array}{l}
\begin{aligned}
IR_T^M(t,t+s)=&P\left[O(u)<L,~\forall u\in (t,t+s],\right.\\
&\left.N_s(t,t+s)=0,~t<R_{1}<t+s\right]\\
=&\displaystyle \sum_{k=\lfloor t/T\rfloor+1}^{\lfloor (t+s)/T\rfloor}P_{R_{1,p}}^M(kT)P\left[O(u-kT)<L,\right.\\
&\left.\forall u\in (0,t+s-kT],N_s(0,t+s-kT)=0\right]\\
=&\displaystyle \sum_{k=\lfloor t/T\rfloor+1}^{\lfloor (t+s)/T\rfloor}R_T^M(t+s-kT)P_{R_{1,p}}^M(kT).
\end{aligned}
\end{array}
\end{equation*}

If $R_1\leq t$
\begin{equation*}
\begin{array}{l}
\begin{aligned}
IR_T^M(t,t+s)=&P\left[O(u)<L,\forall u\in (t,t+s],\right.\\
&\left.~N_s(t,t+s)=0,~R_1\leq t\right]\\
=&\displaystyle \sum_{k=1}^{\lfloor t/T\rfloor}P_{R_1}^M(kT)P\left[O(u-kT)<L,\right.\\
&\left.\forall u\in (t-kT,t+s-kT],\right.\\
&\left.N_s(t-kT,t+s-kT)=0\right]\\
=&\displaystyle \sum_{k=1}^{\lfloor t/T\rfloor}IR_T^M(t-kT,t+s-kT)P_{R_{1}}^M(kT).
\end{aligned}
\end{array}
\end{equation*}

Then, for $t+s\geq T$,  $IR_T^M(t,t+s)$ verifies the following recursive equation
\begin{equation*}
\begin{array}{l}
\begin{aligned}
IR_T^M(t,t+s)=&\displaystyle \sum_{k=\lfloor t/T\rfloor+1}^{\lfloor (t+s)/T\rfloor}R_T^M(t+s-kT)P_{R_{1,p}}^M(kT)\\
+&\displaystyle \sum_{k=1}^{\lfloor t/T\rfloor}IR_T^M(t-kT,t+s-kT)P_{R_{1}}^M(kT)\\
+&J_{T,1}^M(t+s)\mathbf{1}_{\left\{M\leq M_s\right\}}+J_{T,2}^M(t+s)\mathbf{1}_{\left\{M> M_s\right\}},
\end{aligned}
\end{array}
\end{equation*}
and the result holds.

\section*{Acknowledgements}
This research was supported by \textit{Ministerio de
Econom{\'\i}a y Competitividad, Spain} (Project MTM2015-63978-P), \textit{Gobierno de Extremadura, Spain} (Project {GR15106}), and \textit{European
Union} (European Regional Development Funds). Funding for a PhD
grant comes from \textit{Fundaci\'{o}n Valhondo} (Spain).

\section*{References}

\end{document}